\newcommand{\Oh}{\mathcal{O}}
\newcommand{\eps}{\varepsilon}
\newcommand{\elbow}{elbow}
\newcommand{\NULL}{\texttt{NULL}}
\newcommand{\this}{\textsf{this}}
\newcommand{\False}{\textsf{false}}
\newcommand{\True}{\textsf{true}}
\newcommand{\Next}{\textsc{Next}}
\newcommand{\Ptr}{\textsc{Ptr}}
\newcommand{\Advance}{\textsc{Advance}}
\newcommand{\Process}{\textsc{Process}}
\newcommand{\BestStaircase}{\textsc{BestStaircase}}
\newcommand{\BestBStaircase}{\textsc{Best}\textit{b}\textsc{Staircase}}
\newcommand{\Min}{\textsc{MinRange}}
\newcommand{\InvertCycle}{\textsc{InvertCycle}}
\newcommand{\PathEnd}{\textsc{PathEnd}}
\newcommand{\FindIntersection}{\textsc{FindIntersection}}
\newcommand{\GetEnd}{\textsc{GetEnd}}
\newcommand{\myarrow}[1]{\overset{\pi_{#1}}{\longrightarrow}}
\newcommand{\len}[2]{\ell(#1,#2)}
\newcommand{\rin}[1]{\rotatebox[origin=c]{75}{$\in #1$}}
\newcommand{\false}{\textsc{false}}
\newcommand{\true}{\textsc{true}}
\newcommand{\Try}{\textbf{try}}
\newcommand{\Catch}{\textbf{catch}}
\newtheorem{theorem}{Theorem}[section]
\newtheorem{corollary}[theorem]{Corollary}
\newtheorem{lemma}[theorem]{Lemma}
\newtheorem{proposition}[theorem]{Proposition}
\newtheorem{observation}[theorem]{Observation}
\theoremstyle{definition}   
\newtheorem{definition}[theorem]{Definition}
\renewcommand{\paragraph}{\subparagraph}
\title{Strictly In-Place Algorithms for Permuting and Inverting Permutations}
\author[1]{Bart{\l}omiej Dudek}
\author[1]{Pawe{\l} Gawrychowski}
\author[1]{Karol Pokorski}
\affil[1]{Institute of Computer Science, University of Wroc{\l}aw, Poland
[bartlomiej.dudek,gawry,pokorski]@cs.uni.wroc.pl
}
\date{}
\newcommand{\FIGURE}[4]{
\begin{figure}[#1]
\begin{centering}
\includegraphics[width=#2\textwidth]{figures/#3.pdf}
\caption{#4}
\label{fig:#3}
\end{centering}
\end{figure}
}
\begin{document}
\maketitle

\begin{abstract}
We revisit the problem of permuting an array of length $n$ according to a given permutation
in place, that is, using only a small number of bits of extra storage. Fich, Munro and Poblete
[FOCS 1990, SICOMP 1995] obtained an elegant $\Oh(n\log n)$-time algorithm using only $\Oh(\log^{2}n)$ bits of extra
space for this basic problem by designing a procedure that scans the permutation and outputs
exactly one element from each of its cycles. However, in the strict sense in place should be understood
as using only an asymptotically optimal $\Oh(\log n)$ bits of extra space, or storing a constant number
of indices. The problem of permuting in this version is, in fact, a well-known interview question, with the
expected solution being a quadratic-time algorithm. Surprisingly, no faster algorithm seems to be known
in the literature.

Our first contribution is a strictly in-place generalisation of the method of Fich et al. that works in $\Oh_{\eps}(n^{1+\eps})$
time, for any $\eps > 0$. Then, we build on this generalisation to obtain a strictly in-place algorithm for
inverting a given permutation on $n$ elements working in the same complexity.
This is a significant improvement on a recent result of Gu\'spiel [arXiv 2019], who designed an $\Oh(n^{1.5})$-time algorithm.
\end{abstract}

\section{Introduction}
 
Permutations are often used as building blocks in combinatorial algorithms
operating on more complex objects. This brings the need for being able to efficiently operate on permutations.
One of the most fundamental operations is rearranging an array $A[1..n]$ according to a permutation $\pi$.
This can be used, for example, to transpose a rectangular array~\cite[Ex.~1.3.3-12]{KnuthVol1}.
Denoting by $a_{i}$ the value stored in $A[i]$, the goal is to make every $A[i] = a_{\pi^{-1}(i)}$.
This is trivial if we can allocate a temporary array $B[1..n]$
and, after setting $B[i] \gets A[\pi(i)]$ for every $i$, copy $B[1..n]$ to $A[1..n]$. Alternatively, one can
iterate over the cycles of $\pi$ and rearrange the values on each cycle.
Then there is no need for allocating a temporary array as long as we can recognise
the elements of $\pi$ in already processed cycles. This is easy if
we can overwrite $\pi$, say by setting $\pi(i) \gets i$ after having processed $\pi(i)$.
However, we might want to use the same $\pi$ later,
and thus cannot overwrite its elements. In such a case, assuming that every $\pi(i)$
can store at least one extra bit, we could mark the
processed elements by temporarily setting $\pi(i) \gets -\pi(i)$, and after having rearranged
the array restoring the original $\pi$. Even though this reduces the extra space to just one
bit per element, this might be still too much, and $\pi(i)$ might be not stored explicitly but computed on-the-fly,
as in the example of transposing a rectangular array. This motivates
the challenge of designing an efficient algorithm that only assumes access to $\pi$ through an oracle
and uses a small number of bits of extra storage. 

Designing a fast algorithm that permutes $A[1..n]$ given oracle access
to $\pi$ and using only a little extra storage is a known interview
puzzle~\cite[Sec. 6.9]{Aziz2015}. The expected solution is a quadratic-time algorithm
that identifies the cycles of $\pi$ by iterating over $i=1,2,\ldots,n$ and checking if $i$
is the smallest on its cycle in $\pi$. Having identified
such $i$, we permute the values $A[i],A[\pi(i)],A[\pi^{2}(i)],\ldots$. This uses only 
a constant number of auxiliary variables, or $\Oh(\log n)$ bits of additional space, which is
asymptotically optimal as we need to be able to specify an index consisting of $\lceil \log n \rceil$ bits.
However, its worst-case running time is quadratic. Designing a faster solution is nontrivial,
but Fich, Munro and Poblete obtained an elegant $\Oh(n\log n)$-time algorithm using
only $\Oh(\log^{2}n)$ bits of extra space~\cite{FichMP95}. 
Their approach is also based on identifying the cycles of $\pi$. This is implemented by
scanning the elements $i=1,2,\ldots,n$ while testing if the current $i$ is the leader on its cycle,
designating exactly one element on every cycle to be its leader. We call this a cycle
leaders procedure. Due to the unidirectional
nature of the input, the test must be implemented by considering the elements $i,\pi(i),\pi^{2}(i),\ldots$
until we can conclude if $i$ is the leader of its cycle. The main contribution of Fich et al.
is an appropriate definition of a leader that allows to implement such a test
while storing only $\Oh(\log n)$ indices and making the total number of accesses to $\pi$ only
$\Oh(n\log n)$.
They also show an algorithm running in $\Oh(n^2/b)$ time and $b+\Oh(\log n)$ bits of space for arbitrary $b\leq n$.

A procedure that transforms the input using only a small number of bits of extra storage is 
usually referred to as an in-place algorithm. The allowed extra space depends
on the problem, but in the most strict form, this is $\Oh(\log n)$ bits where $n$ is the size
of the input. We call such a procedure strictly in-place.
This is related to the well-studied complexity class $L$ capturing
decision problems solvable by a deterministic Turing machine with $\Oh(\log n)$ bits of
additional writable space with read-only access to the input.
There is a large body of work concerned with time-space tradeoffs assuming
read-only random access to the input. Example problems include: sorting and selection~\cite{BorodinFKLT81,BorodinC82,PagterR98,Beame91,MunroP80,Frederickson87,MunroR96,RamanR99,Chan10,ChanMR15},
constructing the convex hull~\cite{DarwishE14}, multiple pattern matching~\cite{FischerGGK15} or constructing
the sparse suffix array~\cite{GawrychowskiK17,BirenzwigeGP20}.

This raises the question: is there a deterministic subquadratic strictly in-place algorithm for permuting an array?
We provide an affirmative answer to this question by designing, for every $\eps > 0$,
a strictly in-place algorithm for this problem that works in $\Oh_{\eps}(n^{1+\eps})$ time.~\footnote{We write $\Oh_{\eps}(f(n))$ to emphasise that the hidden constant depends on $\eps$.}

\subparagraph*{Previous and related work.} Fich et al.~\cite{FichMP95} designed a cycle leaders algorithm
 that works in $\Oh(n\log n)$ time and uses $\Oh(\log^{2}n)$ bits
of extra space. Given oracle access to both $\pi$ and $\pi^{-1}$, they also show
a simpler algorithm that needs only $\Oh(\log n)$ bits of extra space and the same time. Both
algorithms have interesting connections to leader election in, respectively, unidirectional~\cite{DolevKR82,Peterson82}
and bidirectional~\cite{HirschbergS80,Franklin82} rings. While in certain scenarios, such as $\pi$ being
specified with an explicit formula, one can assume access to both $\pi$ and $\pi^{-1}$,
in the general case this is known to significantly increase the necessary space~\cite{Golynski09}.
For a random input, traversing the cycle from $i$ until we encounter a smaller element
takes in total $\Oh(n\log n)$ average time and uses only $\Oh(\log n)$ bits of space~\cite{Knuth71}.
Using hashing, one can also design an algorithm using expected $\Oh(n\log n)$ time
and $\Oh(\log n)$ bits of space without any assumption on $\pi$ (interestingly, a different application
of randomisation is known to help in leader election in anonymous unidirectional rings~\cite{ItaiR90}).
Better cycle leaders algorithms are known for some specific permutations, such as the perfect shuffle~\cite{EllisKF00}.

An interesting related question is that of inverting a given permutation $\pi$ on $n$ elements.
For example, it allows us to work with both the suffix array and the inverse suffix array without 
explicitly storing both of them~\cite{FischerIKS18}. The goal of this problem is to replace
$\pi$ with its inverse $\pi^{-1}$ efficiently while using a small amount of extra space.
El-Zein, Munro and Robertson~\cite{El-ZeinMR16}
solve this in $\Oh(n\log n)$ time using only $\Oh(\log^{2}n)$ bits of extra
space. The high-level idea of their procedure is to identify cycle leaders and invert every cycle of $\pi$ at its
leader $i$. The difficulty in such an approach is that the leader $i'$ of the inverted cycle
might be encountered again, forcing the cycle to be restored to its original state. El-Zein et al. deal with this hurdle
by temporarily lifting the restriction that $\pi$ is a permutation.
Very recently, Guśpiel~\cite{Guspiel} designed a strictly in-place algorithm for this problem
that works in $\Oh(n^{1.5})$ time. We stress that his approach does not provide a subquadratic strictly in-place
solution for identifying cycle leaders, and so does not imply such an algorithm for permuting an array.
See Table~\ref{table-with-comparison} with the summary of previous work.

\begin{table}[h]
\centering
\begin{tabular}{ |l|c|c| }
\hline
  & Permuting & Inverting \\
\hline
\multirow{2}{*}{Trivial} & \multicolumn{2}{c|}{$(n,n)$} \\
\cline{2-3}
& \multicolumn{2}{c|}{$(n^2,\log n)$} \\
\hline
    With hashing\tablefootnote{This approach runs in expected $\Oh(n\log n)$ time.} & $(n\log n,\log n)$ & - \\
\hline
\multirow{2}{*}{Fich et al. \cite{FichMP95}} & $(n^2/b,b+\log n)$ & - \\
\cline{2-3}
& $(n\log n,\log^2 n)$ & - \\
\hline
El-Zein et al. \cite{El-ZeinMR16} & - & $(n\log n,\log^2 n)$\\
\hline
Guśpiel \cite{Guspiel} & - & $(n^{1.5},\log n)$ \\
\hline
This work\tablefootnote{The constant in time and space complexity depends on $\eps$.} & \multicolumn{2}{c|}{$(n^{1+\eps},\log n)$}\\
\hline
\end{tabular}
\caption{Comparison
of different algorithms for in-place permuting and inverting a permutation.
$(f(n),g(n))$ denotes that the algorithm runs in $\Oh(f(n))$ time and uses $\Oh(g(n))$ bits of
space.
}
\label{table-with-comparison}
\end{table}

\subparagraph*{Our contribution.} Building on the approach of Fich et al. we design, for every $\eps>0$,
a cycle leaders algorithm that works in $\Oh_{\eps}(n^{1+\eps})$ time and uses $\Oh_{\eps}(\log n)$
bits of extra space. This implies a strictly in-place algorithm for permuting an array in $\Oh_{\eps}(n^{1+\eps})$
time. In other words, we show that by increasing the number of auxiliary variables to a larger constant we
can make the exponent in the running time arbitrarily close to 1. Then, we apply our improved
cycle leaders algorithm to obtain a solution for inverting a given permutation in the same time and space.
This significantly improves on the recent result of Guśpiel~\cite{Guspiel}.

\subparagraph*{Techniques and roadmap.} The main high-level idea in Fich et al. is to work with local minima,
defined as the elements $i\in E_{1}=[n]$ of $\pi_{1}=\pi$ such that $\pi^{-1}(i) > i < \pi(i)$. This is applied iteratively by defining
a new permutation $\pi_{2}$ on the set $E_{2}$ of local minima, and repeating the same construction on $\pi_{2}$.
After at most $t\leq \lfloor \log n \rfloor$ iterations, there is only the smallest element $m$ remaining and the leader
is chosen as the unique $i$ on the cycle such that $\pi_{t} \circ \ldots \circ \pi_{1}(i) = m$.
Checking if $i$ is a leader is done by introducing the so-called \emph{elbows}. For completeness, we provide
a full description of the algorithm in Appendix~\ref{se:fich}. Compared to the original version, we introduce
new notation and change some implementation details to make the subsequent modifications easier to state.
The crucial point is that the extra space used by the algorithm is bounded by a constant number of words per
iteration in the above definition. The natural approach for decreasing the space to $\Oh(\log n)$ is
to modify the definition of the local minimum to decrease the number of iterations to a constant. To this end, we work
with $b$-local minima defined as elements less than all of their $b$ successors and predecessors,
for $b=\lceil n^{\eps} \rceil$ where $\eps$ is a sufficiently small constant. This decreases the number of iterations to
$\log n / \log b = 1/\eps = \Oh_{\eps}(1)$.
There is, however, a nontrivial technical difficulty when trying to work with this idea.
In the original version, one can check $\pi_{r}(x)\in E_{r}$ should be ``promoted'' to $E_{r+1}$
by explicitly maintaining $x,\pi_{r}(x)$ and $\pi_{r}(\pi_{r}(x))$ and simply comparing
$\pi_{r}(x)$ with $x$ and $\pi_{r}(\pi_{r}(x))$. For larger values of $b$, this translates into
explicitly maintaining $x,\pi_{r}(x),\pi_{r}(\pi_{r}(x)),\ldots,\pi_{r}^{2b}(x)$ to check if $\pi_{r}^{b}(x)\in E_{r+1}$,
which of course takes too much space.
We overcome this difficulty by designing and analysing a recursive pointer.
This gives us our cycle leaders algorithm described in Section~\ref{se:fich-b}. 

Similarly as El-Zein et al.~\cite{El-ZeinMR16}, we use our cycle leaders algorithm to design
a solution to the permutation inversion problem. The high-level idea is to identify the leader $i$ of a cycle,
and then invert the cycle by traversing it from $i$. We need to somehow guarantee that the cycle is not inverted
again, but do not have enough extra space to store $j$. El-Zein et al. mark the already inverted cycles that
otherwise would be again inverted in the future by converting them to paths, that is, changing some $\pi(x)$
to undefined. This is then gradually repaired to a cycle, which requires a nontrivial interleave of four different scans.
Our starting point is a simplification of their algorithm described in Appendix~\ref{se:inverting-b1} based on encoding
slightly more (but still very little) information about the cycle. In Section~\ref{se:inverting-b} we further extend this
method to work with $b$-local minima.

\section{Preliminaries}

$[n]$ denotes the set of integers $\{1,2,\ldots,n\}$ and for a function $f$ and a nonnegative integer $k$ we
define $f^{k}(x)$ to be $x$ when $k=0$ and $f(f^{k-1}(x))$ otherwise. If $f$ is a permutation then
$f^{-1}$ denotes its inverse and we define $f^{-k}(x)$ to be $f^{-k}(x) = f^{-1}(f^{-k+1}(x))$ for $k > 0$.
Throughout the paper $\log$ denotes $\log_{2}$.

In the cycle leaders problem, we assume that the permutation $\pi$ on $[n]$ is given through an oracle that returns any
$\pi(i)$ in constant time. The goal is to identify exactly one element on each cycle of $\pi$ as a leader. All of our
algorithms follow the same left-to-right scheme: we consider the elements $i=1,2,\ldots,n$ in this
order and test if the current $i$ is the leader of its cycle by considering the elements $i,\pi(i),\pi^{2}(i),\ldots$
until we can determine if $i$ is a leader. 

\begin{algorithm}[t]
\begin{algorithmic}[1]
  \For{$i=1..n$} $\Process(i)$ \EndFor
\end{algorithmic}
\caption{A general framework of \textit{left-to-right} algorithms.}
\label{alg:left_to_right}
\end{algorithm}

Whenever we refer to a~range $x \ldots y$, we mean
$x, \pi(x), \pi(\pi(x)), \dots, y$. We will also consider ranges
$x \ldots y$ longer than a full cycle, but in such cases
there will be always an middle point $z$ (clear from the context) such that
$x \ldots y$ consists of two ranges $x \ldots z$ and $z \ldots y$,
each shorter than a full cycle. For example, in the range $\pi^{-k}(x) \ldots \pi^k(x)$ there are
elements from $\pi^{-k}(x) \ldots x$ and from $x \ldots \pi^k(x)$.
We use $\len{i}{i'}$ to denote $\min\{k > 0 : \pi^k(i) = i'\}$.
$\Min(a, b)$ naively finds the minimum between $a$ and $b$ on the same
cycle of $\pi$, that is, $\min\{a,\pi(a),\pi^{2}(a),\ldots,b\}$. If $a = b$, we
assume it computes the minimum of the full cycle of $a$.
We also create ternary $\Min$ as follows:
$\Min(x, y, z) = \min(\Min(x, y), \Min(y, z))$.
Using this notation,
the naive cycle leaders algorithm is presented in Algorithm~\ref{alg:process}.

\begin{algorithm}[t]
\begin{algorithmic}[1]
  \Function{$\Process$}{$i$}
    \If{$i=\Min(i,i)$} Report that $i$ is a leader. \EndIf
 \EndFunction
\end{algorithmic}
\caption{A naive cycle leaders algorithm.}
\label{alg:process}
\end{algorithm}

When inverting $\pi$, we assume constant-time random access to the input. Due to the final
goal being replacing every $\pi(i)$ with $\pi^{-1}(i)$, we allow temporarily overwriting $\pi(i)$ with any value from $[n]$
as long as after the algorithm terminates the input is overwritten as required.
Additional space used by our algorithms consists of a number of auxiliary variables called words, each capable
of storing a single integer from $[n]$. We assume that basic operations on such variables take constant time. We
assume that the value of $n$ is known to the algorithm.

\section{Leader election in smaller space}\label{se:fich-b}

In this section, we extend the algorithm of Fich et al.~\cite{FichMP95} to obtain, for any $\eps>0$,
a solution to the cycle leaders problem in $\Oh_\eps(n^{1+\eps})$ time and
using $\Oh_\eps(\log n)$ bits of extra space.

Let $E_1 = [n]$ and $\pi_1 = \pi$. We denote by $b=\lceil n^{\eps} \rceil$ the size of the neighborhood considered while
determining local minima and declare an element of a permutation to be a $b$-local minimum if it is
strictly smaller than all of its $b$ successors and predecessors. Then,
$E_r$ is the set of all $b$-local minima in $E_{r-1}$, that is,
$E_r=\{i\in E_{r-1} : i < \pi_{r-1}^k(i) \textrm{ for all } k \in \{-b, \dots, b\} \setminus \{0\}\}$.
We say that an element is on level $r$ if it belongs to $E_r$.
We define $\pi_r : E_r \rightarrow E_r$ as follows: $\pi_r(e)$ is the first
element of $E_r$ appearing after $e$ on its cycle in $\pi$.
For $b=1$ this is exactly the definition used by Fich et al.~\cite{FichMP95}, and for the reader who is
unfamiliar with their work we provide a recap of the algorithm and some intuition in Appendix~\ref{se:fich}.

We use the same framework in the next section for the permutation inversion problem,
and so we already introduce some additional definitions that are only used later.

\begin{definition}
A \emph{path} of $\pi$ is a partial function obtained from a cycle $C$ of $\pi$
by replacing $\pi(x)$ with $\perp$, for some $x \in C$.
\end{definition}

$\perp$ should be understood as an undefined element. For a path, $\pi_{r}$
is undefined for the last element from $E_{r}$, and similarly $\pi^{-1}_{r}$ is undefined
for the first element from $E_{r}$. When deciding if an element is a $b$-local minimum on a~path,
we disregard comparisons with such undefined elements.
Furthermore, we assume that $\pi_r(\perp) = \perp$ and $\pi^{-1}_{r}(\perp)=\perp$ for every $r$.

For each level $r$, only at most $\frac{1}{b+1}$ of its elements can
belong to the level $r+1$. For a~cycle or a~path $C$, let $t$ be the largest
number such that $|E_t \cap C| > b$. We set $t = 0$ for $|C| \le b$ and
observe that $t < \frac{1}{\epsilon}$ for every $C$.
We note that if $C$ is a~cycle then $E_{t+2} \cap C = \emptyset$, but for
$C$ being a~path $|E_{t+2} \cap C| = 1$.
Because all algorithms presented in this paper follow the framework given in
Algorithm~\ref{alg:left_to_right} and during the execution of $\Process(i)$, we
only consider the elements that can be reached from $i$, we restrict
our considerations to just one cycle or path and we are going to omit the
``$\cap C$'' part everywhere later.

\begin{definition}
 A \emph{$b$-staircase} of size $r$ from $i$ is a sequence of elements
 $(i=i_1,i_2,\ldots,i_{r+1}=m=j_{r+1},j_r,j_{r-1},\ldots,j_1=i')$ such that
 $i_k,j_k\in E_k$, for $k\in[r+1]$ and $i_{k+1}=\pi_k^b(i_k)$, $j_k=\pi_k^b(j_{k+1})$ for $k\in[r]$.
 Elements $i$, $m$ and $i'$ are~called \emph{the start}, \emph{the middle}
 and \emph{the end} of the $b$-staircase, respectively.
 The part of the $b$-staircase from the start to the middle is called
 its \emph{left part} and the part from the middle to the end is called its
 \emph{right part}.
\end{definition}

\begin{definition}
  An \emph{almost $b$-staircase} of size $r$ from $i$ is a sequence of elements
  $(i=i_1,i_2,\ldots,i_{r+1}=m=j_{r+1},j_r,j_{r-1},\ldots,j_1=i')$, such that
  $i_k,j_k\in E_k$, for $k\in[r]$ and $i_{k+1}=\pi_k^b(i_k)$ and
  $j_k=\pi_k^b(j_{k+1})$ for $k\in[r]$.
\end{definition}

\begin{definition}
  An almost $b$-staircase of size $r$
  $(i=i_1,i_2,\ldots,i_{r+1}=m=j_{r+1},j_r,j_{r-1},\ldots,j_1=i')$
  is called \emph{proper} if $|E_r| > b$.
\end{definition}

\begin{definition}\label{def:BestBStaircase}
  A~$b$-staircase of size $r$ from $i$ is called the \emph{best $b$-staircase
  from $i$} if there is no proper almost $b$-staircase of size $r+1$ from $i$
  (possibly there is no best $b$-staircase from $i$).
\end{definition}

\begin{definition}
 An element $i$ is the leader of its cycle if the best $b$-staircase from $i$
 exists and its middle $m$ is the minimum on the cycle.
 \label{def:leader-b}
\end{definition}

\begin{lemma}
  There is exactly one leader on any cycle of a permutation.
\end{lemma}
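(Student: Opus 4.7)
The plan is to fix a cycle $C$ with minimum $m$ and let $t$ be the largest index with $|E_t\cap C|>b$, adopting the convention $t=0$ when $|C|\le b$. I would first establish two structural facts. First, $m\in E_{t+1}$: for each $k\le t$, the cycle at level $k$ has more than $b$ elements, so the $2b$ values $\pi_k^{j}(m)$ with $j\in\{-b,\ldots,b\}\setminus\{0\}$ are all distinct from $m$ and strictly larger than $m$, which promotes $m$ from $E_k$ into $E_{k+1}$. Second, $E_{t+2}\cap C=\emptyset$: because $|E_{t+1}\cap C|\le b$, every $x\in E_{t+1}\cap C$ satisfies $\pi_{t+1}^s(x)=x$ for some $s\in\{1,\ldots,b\}$, which breaks the strict inequality required for $x$ to be a $b$-local minimum at level $t+1$.

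For existence, I would construct a leader explicitly: put $i_{t+1}:=m=:j_{t+1}$ and then $i_k:=\pi_k^{-b}(i_{k+1})$ and $j_k:=\pi_k^{b}(j_{k+1})$ for $k=t,t-1,\ldots,1$. Since each $\pi_k^{\pm b}$ permutes $E_k$, all $i_k,j_k$ lie in $E_k$, so this is a genuine $b$-staircase of size $t$ from $i:=i_1$ with middle $m$. By the second fact, $|E_{t+1}\cap C|\le b$, so no almost $b$-staircase of size $t+1$ from $i$ can be proper; hence the constructed staircase is the best $b$-staircase from $i$, and Definition~\ref{def:leader-b} identifies $i$ as a leader of $C$.

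For uniqueness, suppose $i'$ is any leader of $C$, with best $b$-staircase of size $r'$ and middle $m$. I would rule out both $r'>t$ and $r'<t$. The case $r'>t$ fails immediately because $E_{r'+1}\cap C\subseteq E_{t+2}\cap C=\emptyset$, contradicting $m\in E_{r'+1}$. For $r'<t$ I would extend the size-$r'$ staircase by one level: set $i_{r'+2}:=\pi_{r'+1}^{b}(m)=:j_{r'+2}$ and $j_{r'+1}:=\pi_{r'+1}^{b}(j_{r'+2})$, and observe that all these new elements lie in $E_{r'+1}$ because $\pi_{r'+1}$ permutes $E_{r'+1}$. This yields an almost $b$-staircase of size $r'+1$ from $i'$ which is proper since $|E_{r'+1}\cap C|\ge|E_t\cap C|>b$, contradicting bestness. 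Hence $r'=t$, and because the left half of a $b$-staircase with fixed size and fixed middle is determined by iterating $\pi_k^{-b}$ backwards, $i'=i$.

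I expect the main subtlety to be the $r'<t$ extension step, because one must remember that an almost $b$-staircase of size $r'+1$ demands membership in $E_k$ only for $k\in[r'+1]$, not $k=r'+2$; it is precisely this slack that lets the extension go through using only that $\pi_{r'+1}$ is defined on $m$.
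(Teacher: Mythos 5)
Your proof is correct and follows essentially the same route as the paper: you construct the leader explicitly as $i=\pi_1^{-b}(\cdots(\pi_t^{-b}(m))\cdots)$ for the cycle minimum $m$ and the maximal level $t$ with $|E_t|>b$, and you get uniqueness by forcing every leader's best $b$-staircase to have size exactly $t$ (so its start is determined by iterating $\pi_k^{-b}$ from $m$). The only difference is that you spell out the supporting facts ($m\in E_{t+1}$, $E_{t+2}=\emptyset$, and the two-sided size argument) that the paper leaves implicit in its remark that all best $b$-staircases have the same size.
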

\begin{proof}
  Let $m$ be the minimum on the cycle and $t$ be the largest size of a proper
  $b$-staircase on the cycle.
  Consider the $b$-staircase $B$ of size $t$ with its start in
  $i = \pi_1^{-b}(\ldots(\pi_{t-1}^{-b}(\pi_t^{-b}(m)))\ldots)$.
  As $m$ is the minimum on the cycle, there is no proper staircase of size $t+1$
  from $i$, so $B$ is the best $b$-staircase and hence $i$ satisfies
  Definition \ref{def:BestBStaircase} of the leader.
  As all the best $b$-staircases are of the same size we cannot have more than one leader on
  the cycle.
\end{proof}

It may be that a~$b$-staircase is longer than a full cycle if it
visits some elements twice (before and after reaching the middle of the
$b$-staircase). All elements occur at most once in each part of the
$b$-staircase.
We allow computing $\Min(x, y)$ for $x = \perp$ or $y = \perp$.
In such a case, the minimum is computed from the beginning or to the end of the
path, respectively.

\begin{lemma}
  Consider $m\in E_r$ on a cycle or a path where $|E_r|>b$.
  Then $m$ is a~$b$-local minimum on level $r$ in $\pi$ if and only if
  $m = \Min(\pi_r^{-b}(m), m, \pi_r^b(m))$.
  \label{le:b-local-min}
\end{lemma}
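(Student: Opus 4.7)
The plan is to prove the two directions separately, with the reverse direction being immediate and the bulk of the work concentrated in the forward direction. For $(\Leftarrow)$, observe that if $m=\Min(\pi_r^{-b}(m), m, \pi_r^b(m))$, then $m$ is the minimum over the whole $\pi$-range from $\pi_r^{-b}(m)$ to $\pi_r^b(m)$ passing through $m$. Since every $\pi_r^k(m)$ with $k\in\{-b,\ldots,b\}\setminus\{0\}$ appears on this range and distinct positions of a permutation take distinct values, we immediately get $m<\pi_r^k(m)$ for every such $k$, making $m$ a $b$-local minimum on level $r$.

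For $(\Rightarrow)$, the idea is to slice $R=\pi_r^{-b}(m)\ldots\pi_r^b(m)$ into the $2b$ sub-ranges between consecutive level-$r$ elements $\pi_r^{k-1}(m)$ and $\pi_r^k(m)$, and reduce the lemma to the following auxiliary claim: \emph{for every $s\ge 1$ and every pair of consecutive level-$s$ elements $p,q=\pi_s(p)$, any $z$ on the $\pi$-range from $p$ to $q$ satisfies $z\ge \min(p,q)$}. Granting this claim, each slice has minimum at least $\min(\pi_r^{k-1}(m),\pi_r^k(m))\ge m$, where the last inequality is the $b$-local minimum property of $m$ on level $r$; since $m$ itself lies in $R$, this yields $m=\Min(\pi_r^{-b}(m), m, \pi_r^b(m))$, as desired.

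The auxiliary claim is proved by induction on $s$. The base $s=1$ is immediate since the $\pi$-range between $p$ and $\pi(p)$ contains only these two endpoints. For $s\ge 2$, let $p=p_0, p_1, \ldots, p_T=q$ be the consecutive level-$(s-1)$ elements from $p$ to $q$, so $p_1, \ldots, p_{T-1}\in E_{s-1}\setminus E_s$. The induction hypothesis applied to each consecutive pair $(p_{j-1}, p_j)$ takes care of any $z$ strictly between $p_{j-1}$ and $p_j$, so what remains is to show $p_j\ge \min(p,q)$ for every $j\in\{0,\ldots,T\}$. The endpoints $j=0,T$ are trivial. For intermediate $p_j$, I focus on the smallest one, $p_{j^*}$: because $p_{j^*}\notin E_s$, it is not a $b$-local minimum on level $s-1$, hence some $p_{j^*+k'}$ with $k'\in[-b,b]\setminus\{0\}$ satisfies $p_{j^*+k'}<p_{j^*}$. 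By minimality of $p_{j^*}$ among intermediates, $p_{j^*+k'}$ cannot itself be intermediate, so either $j^*+k'\in\{0,T\}$ or $j^*+k'\notin\{0,\ldots,T\}$. In the first case $p_{j^*}>p$ or $p_{j^*}>q$ directly; in the second, $|k'|>j^*$ (or $|k'|>T-j^*$) together with $|k'|\le b$ puts $p_{j^*}$ within $b$ $\pi_{s-1}$-steps of $p$ (resp.\ $q$), and the $b$-local minimum property of that endpoint---valid because $p,q\in E_s$---again forces $p_{j^*}>p$ (resp.\ $p_{j^*}>q$). In every case $p_{j^*}>\min(p,q)$, whence $p_j\ge p_{j^*}>\min(p,q)$ for all intermediate $j$.

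The main obstacle I anticipate is precisely this smallest-intermediate argument: it must simultaneously accommodate the case where the non-minimality witness hits an endpoint of the segment and the case where it escapes the segment entirely, invoking the $b$-local minimum property of the appropriate endpoint only in the latter situation. The remaining pieces---slicing $R$, the reverse direction, and the base of the induction---are routine.
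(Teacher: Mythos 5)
Your argument for the cycle case is correct, but it takes a genuinely different route from the paper. The paper proves the forward direction by induction on the level $r$: it surrounds each of the $2b+1$ elements $m_k=\pi_r^k(m)$ with its radius-$b$ neighbourhood in $\pi_{r-1}$ (to which the induction hypothesis applies, using $|E_{r-1}|\ge|E_r|>b$), and then disposes of the elements lying in the gaps between these neighbourhoods by a descending-witness argument that repeatedly replaces a gap element by a smaller element of higher level, relying on the nested structure of elbows to argue the witness stays in the same gap. You instead isolate a cleaner auxiliary lemma --- between two consecutive level-$s$ elements $p$ and $q=\pi_s(p)$, every element of the $\pi$-range is at least $\min(p,q)$ --- proved by induction on $s$ with an extremal (smallest intermediate) argument, and then tile the range $\pi_r^{-b}(m)\ldots\pi_r^b(m)$ by the $2b$ slices between consecutive level-$r$ elements. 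Your case analysis for the smallest intermediate (witness hits an endpoint by index, versus witness escapes the segment, in which case $p_{j^*}$ itself lies within $b$ $\pi_{s-1}$-steps of $p$ or $q$ and their membership in $E_s$ applies) is sound, works index-wise even when the witness wraps around a short cycle, and avoids the paper's somewhat hand-wavy ``stays in the same gap'' step. For cycles this is a valid and arguably more self-contained proof.

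However, there is a genuine gap: the lemma is stated for cycles \emph{and paths}, and your proposal never addresses paths. On a path, $\pi_r^{-b}(m)$ or $\pi_r^{b}(m)$ may be $\perp$ (when $m$ has fewer than $b$ level-$r$ predecessors or successors), and by the paper's convention $\Min$ then runs to the physical beginning or end of the path. In that situation the range contains elements that precede the first level-$r$ element of the path (or follow the last one), and these belong to no slice ``between consecutive level-$r$ elements'', so your tiling does not cover them; your auxiliary claim, which is only about the range between two consecutive level-$s$ elements, says nothing about them either. To close this you need an additional boundary statement --- e.g.\ every element of a path occurring before its first level-$s$ element is larger than that element (and symmetrically after the last one) --- which can be proved by the same kind of induction and extremal argument (the witness of non-membership in $E_s$ cannot lie in the prefix by minimality and cannot be an undefined predecessor, so the endpoint's $b$-local minimality applies), combined with the fact that the extreme defined $\pi_r^{\pm j}(m)$ are themselves larger than $m$. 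You would also need to note that the interior argument respects the path convention of disregarding undefined comparisons (it does, since the definitional witness is always a defined element), but as written the path case is simply missing.
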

\begin{proof}
  ($\Leftarrow$) Trivial. ($\Rightarrow$) Proof by induction on level $r$.
  The base case $r = 1$ is immediate.

  For $r > 1$, assume that $m$ is~a $b$-local minimum on level $r$.
  We consider the elements in $\pi_{r}^{-b}(m)\ldots \pi_{r}^{b}(m)$, appropriately
  truncated if we are on a path. Among these elements, only $m_k = \pi_r^k(m) \neq \perp$
  for $k \in \{-b, \dots, b\}$ are $b$-local minima on level $r-1$.
  All of them are also larger than $m$ (if $k \neq 0$). By the induction
  hypothesis, for all $k \in \{-b, \dots, b\}$ (such that $m_k \neq \perp$)
  any element between $\pi_{r-1}^{-b}(m_k)$ and
  $\pi_{r-1}^b(m_k)$ on the cycle/path is larger than or equal to $m_k$ and
  hence also larger than $m$.
  We call the above ranges the \emph{ranges around $m_k$} and they are
  represented as segments in Figure~\ref{fig:proof-min-range}.

  \FIGURE{h}{0.98}{proof-min-range}{
    Filled nodes represent elements $m_k=\pi_r^k(m)$ for $k\in\{-b,\ldots,b\}$
    and each segment represents the range around $m_k$.
    All the elements in the ranges (different than $m$) are larger than $m$ by
    induction hypothesis.
  }

  Consider an element $x$ in $\pi_r^{-b}(m) \ldots \pi_r^b(m)$
  which is not in the range around $m_k$ for any $k$.
  In Figure~\ref{fig:proof-min-range}, this corresponds to $x$ in a gap between
  the ranges.
  Clearly, $x \not \in E_r$, so there is an $\ell <r$ such that
  $x\in E_\ell \setminus E_{\ell+1}$.
  Hence $x$ is larger than some $x'\in E_{\ell'}\setminus E_{\ell'+1}$ between
  $\pi_\ell^{-b}(x)$ and $\pi_\ell^b(x)$ for $\ell < \ell' \le r$.
  If there is $k$ such that $x'$ is in the range around $m_k$, then $x'>m_k>m$ and
  the lemma follows.
  Otherwise, observe that $x'$ belongs to the same gap as $x$ (see Figure
  \ref{fig:recursive-elbows} for more detailed explanation why) so we can apply
  the same reasoning for $x'$ instead of $x$.
  Because the considered values are always decreasing and each gap contains
  only finite number of elements, finally we obtain a value in one of the ranges
  around $m_k$ and conclude that $x$ is larger than one of the $m_k$ and also
  than $m$.
  Hence, $m$ is the smallest element in $\pi_r^{-b}(m) \ldots \pi_r^b(m)$.
\end{proof}
The lemma enables us to check if an element $m$ is a $b$-local minimum on
level $r$ without knowing which of the elements in
$\pi_r^{-b}(m)\ldots\pi_r^b(m)$ belong to the level $r$.

Fich et al. develop the procedure $\Next(r)$ which computes, for an element on $E_r$,
its successor on $\pi_r$.
We start with extending their idea to~work with larger values of the parameter $b$.
For $b = \lceil n^{\epsilon} \rceil$, the number of elements to compare with is
no longer constant.
We~define a recursive pointer $\Ptr$ with the following fields:
\begin{itemize}
  \item $r$ -- the level of the pointer,
  \item $e$ -- an element of $E_r$ pointed at by the pointer,
  \item $x$ -- $\Ptr$ of level $r-1$ pointing to $e$ (or $\NULL$ if $r = 1$),
  \item $y$ and $z$ -- $\Ptr$s of level $r-1$ both pointing to $\pi_{r-1}^b(e)$
  (or $\NULL$ if $r = 1$).
\end{itemize}

$\Ptr$ has the $\Advance$ method which moves the pointer
from $e$ to $\pi_r(e)$ and updates $x$, $y$ and $z$ accordingly.
See Figure \ref{fig:ptr}.
Before and after calling $\Advance$, $y$ is equal to $z$, but during
the execution they are different.
In fact, it is enough to only store $e$, $x$ and $z$ in every $\Ptr$
and keep a local $y$ reused between every call to $\Advance$ method.

\FIGURE{h}{1}{ptr}{
  A sketch of $\Ptr$ structure for $b = 3$. On the bottom there is $\Ptr'$, the result of executing the $\Advance$ method on $\Ptr$.
  The gray dots are not part of pointer, they only show the alignment between $\Ptr$ and $\Ptr'$.
}

\begin{algorithm}[t]
\begin{algorithmic}[1]
  \Function{$\Ptr::\Advance$}{}
    \If{$\this.r = 1$}
      \If{$\pi(\this.e) = \perp$} \textbf{abort}
      \label{li:abort2}
      \EndIf
      \State $\this.e \gets \pi(\this.e)$
      \State \Return
    \EndIf
    \For{$j=1..b$}
      $\this.z.\Advance()$
    \EndFor
    \label{li:z-advanced}
    \While{$\this.y \neq \Min(\this.x.e, \this.y.e, \this.z.e)$}\label{li:check-min}
      \State $\this.x.\Advance()$
      \State $\this.y.\Advance()$
      \State $\this.z.\Advance()$
    \EndWhile
    \State $\this.x \gets \this.y$
    \State $\this.e \gets \this.x.e$
    \State $\this.y \gets \this.z$
  \EndFunction
\end{algorithmic}
\caption{Implementation of the $\Advance$ method in the $\Ptr$ structure. It
moves the $\Ptr$ object onto the next element on the same level.}
\label{alg:ptr_advance}
\end{algorithm}

We now analyse $\Ptr::\Advance$ implemented in
Algorithm~\ref{alg:ptr_advance} for a pointer of level $r$.
At the end of the function, $e$ is updated to point to the element
$x.e \in E_r$.
As we have to also deal with paths, we add the check for $\perp$ in line
\ref{li:abort2}.
For $r \ge 2$, $z$ is moved forward $b$ times along $\pi_{r-1}$, so after line
\ref{li:z-advanced}, $y$ and $z$ point to $\pi_{r-1}^b(x.e)$ and
$\pi_{r-1}^{2b}(x.e)$, respectively.
In line~\ref{li:check-min} we compare $y.e$ with all elements
between $\pi_{r-1}^{-b}(y.e)$ and $\pi_{r-1}^b(y.e)$ and all three pointers are
simultaneously advanced to the next elements along $\pi_{r-1}$ until $y$ points
to a $b$-local minimum on level $r$.
By Lemma~\ref{le:b-local-min} this is equivalent to checking if $y.e$ is a
$b$-local minimum on level $r-1$ as long as $|E_{r-1}| > b$.

\begin{algorithm}[t]
\begin{algorithmic}[1]
  \Function{$\BestBStaircase$}{$i$}
    \State $p \gets \Ptr(e \gets i,~ r \gets 1,~ x \gets \NULL,~ y \gets \NULL,~ z \gets \NULL)$
    \State $x \gets i$
    \For{$j=1..b$}
      \State $x \gets \pi(x)$
      \If{$x \in \{\perp, i\}$}
        \Return $p$
      \EndIf
    \EndFor
    \While{$\true$}
      \State $g \gets p$
      \State $m_e \gets p.e$
      \For{$j=1..2b$}
        \State \Try{
          $p.\Advance()$
        }
        \Catch{ \textit{/abort/} }{
          \Return $g$
        }
        \If{$j \le b\textbf{ and }p.e = g.e$}
          \Return $g$
        \EndIf
        \If{$j = b$}
          $m_p \gets p$
        \EndIf
        \If{$p.e < m_e$}
          $m_e \gets p.e$
        \EndIf
      \EndFor
      \If{$m_p.e = m_e$}
        $p \gets \Ptr(e \gets m_p.e,~ r \gets p.r + 1,~ x \gets m_p,~ y \gets p,~ z \gets p)$
      \Else
        ~\Return \NULL \label{li:abort3}
      \EndIf
    \EndWhile
  \EndFunction
\end{algorithmic}
\caption{Constructs the best $b$-staircase from $i$ (if any).}
\label{alg:best_bstaircase}
\end{algorithm}

In Algorithm~\ref{alg:best_bstaircase} we implement $\BestBStaircase(i)$ that constructs the best
$b$-staircase from $i$ returning $\Ptr$ structure pointing to the middle of the
staircase.
Before running the main loop, the algorithm checks if $|E_1| > b$, to make sure it is allowed to execute $\Advance$
method on $p$. We start with a $b$-staircase of size $0$.
Each iteration of the main loop starts with $p$ of level $r$ representing the
right part of the~$b$-staircase of size $r-1$ from $i$.
The invariant $|E_r| > b$ is preserved in the main loop.
During each iteration, $p$ is subject to change and the updated $p$
may not represent a (proper) $b$-staircase, so at the beginning of the
iteration, it is stored in $g$.
Then, $\Advance$ method is called $2b$ times on $p$ and enumerates the elements
from the set $S = \{\pi_r^k(e)$ : $k \in \{1, \dots, 2b\}\}$. The pointer $m_p$
is stored after the $b$-th $\Advance$ so it points to $\pi_r^b(e)$, that is the
middle of the almost staircase of size $r$.
If $m_p.e$ is not the smallest in $S$, then the best $b$-staircase from $i$ does
not exist and we terminate in line~\ref{li:abort3}.
However, if any of $\pi_r^k(e)$ for $k \in \{1, \dots, b\}$ is equal to $e$,
then $|S| \le b$, so there is no proper almost staircase of size $r$ from $i$
(both left and right part would self-overlap) and we return $g$ as the best
staircase.
The same happens for paths in case any $\Advance$ call aborts.
Only when $|E_r| > b$ and the new middle $m_e$ is less than all $b$
pairwise-distinct left and $b$ pairwise-distinct right neighbours on
$E_r$, the algorithm extends the $b$-staircase to size $r$ and proceeds to the
next iteration with $p$ of level $r+1$.

We are ready to provide an algorithm for the cycle leaders problem. We proceed
as in Algorithm~\ref{alg:left_to_right}, but we alter the
$\Process$ function to work with the new definition of leader, see
Algorithm~\ref{alg:process_with_b_best}.
\begin{algorithm}[t]
\begin{algorithmic}[1]
  \Function{$\Process$}{$i$}
    \State $p \gets \BestBStaircase(i)$
    \If{$p \ne \NULL\textbf{ and }\Min(i, i) = p.e$}
      Report that $i$ is the leader
    \EndIf
 \EndFunction
\end{algorithmic}
\caption{Modified function for checking if $i$ is the leader of its cycle.}
\label{alg:process_with_b_best}
\end{algorithm}
$\BestBStaircase(i)$ returns $\Ptr$ representing the right part of the
best $b$-staircase from~$i$. Its middle can be read from the field
$e$ of the returned $\Ptr$.

\begin{theorem}
  For every $\eps>0$, there exists an algorithm for reporting leaders of
  a~permutation $\pi$ on $n$ elements in $\Oh_\eps(n^{1+\epsilon})$
  time using $\Oh_\eps(\log n)$ additional bits of space.
  \label{thm:alg_rep_leaders}
\end{theorem}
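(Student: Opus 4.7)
The plan is to verify three claims: correctness of leader selection, a space bound of $\Oh_\eps(\log n)$ bits, and a time bound of $\Oh_\eps(n^{1+\eps})$. Since the theorem concerns a permutation, the structures encountered are full cycles and the abort paths in $\Advance$ and $\BestBStaircase$ are never taken.

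Correctness proceeds by induction on the level $r$ that $\BestBStaircase(i)$ returns a pointer to the middle of the best $b$-staircase from $i$ (or \NULL{} if none exists): at the start of the $r$-th iteration of the main loop, $p$ is a level-$r$ pointer positioned at the middle of the $b$-staircase of size $r-1$ from $i$ (recorded in $g$), and $|E_r| > b$. The $2b$ subsequent $\Advance$ calls enumerate the next $2b$ level-$r$ elements; by Lemma~\ref{le:b-local-min} applied at level $r+1$, the test $m_p.e = m_e$ correctly detects whether the $b$-th of them is a $b$-local minimum at level $r+1$, i.e.\ whether the staircase extends. A self-overlap within $b$ steps signals $|E_r| \le b$, so no proper almost $b$-staircase of size $r+1$ exists and $g$ is the best $b$-staircase. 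Correctness of $\Advance$ at level $r$ itself is another application of Lemma~\ref{le:b-local-min} to the triple $x.e, y.e, z.e$. Combined with the earlier lemma that each cycle has exactly one leader, $\Process$ reports exactly one element on each cycle.

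For space, by induction on $r$ a $\Ptr$ at level $r$ occupies $\Oh(r)$ words --- the $y$ field can live as a local variable inside $\Advance$ rather than be stored persistently. Since the largest $b$-staircase on any cycle has size $t \le 1/\eps - 1 = \Oh_\eps(1)$, the top-level $\Ptr$ together with all auxiliary variables of $\BestBStaircase$ and $\Process$ use $\Oh_\eps(1)$ words, i.e.\ $\Oh_\eps(\log n)$ bits.

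The time bound is the hard part: a direct per-call worst-case estimate is $\Oh_\eps(n)$, which naively sums to $\Oh_\eps(n^2)$, so an amortized argument is essential. I would count the total number of Advance-at-level-$1$ calls across the whole algorithm. Iteration $r$ of $\BestBStaircase(i)$ is reached only when a $b$-staircase of size $r-1$ from $i$ exists, and such starting points are in bijection with $E_r$, so at most $|E_r| \le n/b^{r-1}$ values of $i$ enter it; each such run performs $\Oh(b)$ level-$r$ advances, giving a total $A_r = \Oh(nb^{2-r})$. Each level-$r$ advance does $b$ initial advances of $z$ plus a while loop whose cost equals the level-$(r-1)$ arc to the next level-$r$ element. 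A window-counting argument --- each level-$(r-1)$ element lies in $\Oh(b)$ of the $2b$-element sliding windows among level-$r$ elements --- amortizes the while-loop cost to $\Oh(b)$ per level-$r$ advance. Unrolling the recursion yields $\Oh(b^{r-1})$ level-$1$ queries per level-$r$ advance, for a total of $\sum_r A_r \cdot \Oh(b^{r-1}) = \sum_r \Oh(nb) = \Oh_\eps(nb) = \Oh_\eps(n^{1+\eps})$. The main obstacle is making the window-counting claim precise and verifying that the two amortization savings (sparsity of $i$'s reaching higher iterations, and amortized while-loop cost) exactly balance so that each of the $\Oh_\eps(1)$ levels contributes $\Oh_\eps(nb)$.
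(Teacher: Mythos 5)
Your overall strategy mirrors the paper's, but two steps do not hold as written. The smaller one is the space induction: a level-$r$ \Ptr{} stores $e$ together with two (three, before localising $y$) \emph{complete} level-$(r-1)$ \Ptr{}s pointing at different elements, so its size satisfies $S(r)=\Oh(1)+2S(r-1)$, i.e.\ it is exponential in $r$, not $\Oh(r)$ words. The conclusion survives only because $r\le t<1/\eps$, which is how the paper argues it (there are $\Oh(3^{t})=\Oh_\eps(1)$ pointers in total), but your inductive claim as stated is false.

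The more serious gap is in the time bound. Counting level-$1$ \Advance{} calls misses the dominant cost: every iteration of the while loop in \Advance{} evaluates $\Min(\this.x.e,\this.y.e,\this.z.e)$, which walks the \emph{underlying} permutation $\pi$ across the whole window between $x.e$ and $z.e$; these $\pi$-accesses are not level-$1$ advances. Since each $\pi$-position a pointer passes can be rescanned by up to $2b+1$ such windows, the cost is a factor $\Oh(b)$ on top of the traversal length, and the algorithm as implemented runs in $\Oh_\eps(nb^{2})=\Oh_\eps(n^{1+2\eps})$; the theorem is then obtained by rerunning the construction with $\eps/2$, a re-adjustment your claimed $\Oh_\eps(nb)$ total never confronts. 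Moreover, the amortization you yourself flag as the obstacle is exactly where the argument breaks: ``$\Oh(b)$ level-$(r-1)$ advances per level-$r$ advance'' is false as a local statement, because the $\pi_{r-1}$-distance from $\pi_{r-1}^{b}(e)$ to the next element of $E_r$ can be arbitrarily larger than $b$. The paper avoids any per-advance bound by a global charging argument: if $\BestBStaircase(i)$ stops at size $r$, its entire traversal lies between $\pi_{r+1}^{-1}(m)$ and $\pi_{r+1}^{2b+1}(m)$, where $m\in E_{r+1}\setminus E_{r+2}$ is the middle of the size-$r$ staircase from $i$; since the middle determines the start, each element of the cycle lies in $\Oh(b)$ ranges charged at level $r$, so the total $\pi$-length traversed per level is $\Oh(nb)$, and only then is the extra $\Oh(b)$ for the $\Min$ scans (and the $\Oh_\eps(1)$ pointer count) multiplied in. Your bijection bound $|E_r|\le n/b^{r-1}$ on starts reaching iteration $r$ is fine, but without this covering-by-charged-ranges step, the accounting of the while loops, the $\Min$ factor, and the final halving of $\eps$, the proof is not complete.
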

\begin{proof}
We analyse Algorithm~\ref{alg:left_to_right} with implementation of
$\Process$ provided in Algorithm~\ref{alg:process_with_b_best}.
Each $\Ptr$ of level $r$ has three $\Ptr$s on level $r-1$, so there are
$\sum_{k=0}^{t} 3^k = \Oh(3^t)$ pointers in total.
As $t \le \frac{1}{\epsilon}$, our algorithm uses
$\Oh(3^{\frac{1}{\epsilon}} \cdot \log n)=\Oh_\eps(\log n)$ extra bits of space.

The total time for reporting leaders is dominated by the time spent in
$\Ptr::\Advance$.
We sum up (for all created pointers) just the time inside $\Advance$ call of
each $\Ptr$ without recursive calls. The time for recursive calls will be
accounted to the descendant pointers.
Inside $\Advance$, the time is dominated by $\Min$ executions.
As each advanced position is covered by at most $2b+1$ $\Min$s, if a pointer $p$
proceeds $k$ steps along $\pi$ (starting with $p.x.e = s$ and ending
with $p.z.e = \pi^k(s)$) during $\Advance$,
the total time spent on executions of $\Min$ for the call is $\Oh(kb)$.

Now we count the total length of traversals for all created pointers.
This, multiplied by $\Oh(b)$, is the final complexity of the algorithm.
Recall that for every element, we check if it is possible to create the best
$b$-staircase from it by constructing proper almost $b$-staircases and
validating if these are $b$-staircases. Thus, if it is possible to
construct a $b$-staircase of size $r$ and not $r+1$ from $i$, 
every pointer traverses (in the worst possible case) all elements of the
almost $b$-staircase of size $r+1$ from $i$.
We charge the traversed range to the pair $(r, m)$,
where $m \in E_{r+1} \setminus E_{r+2}$ is the middle of the $b$-staircase of size $r$ from $i$.
Observe that $i$ is between $\pi_{r+1}^{-1}(m)$ and $m$ and the execution of
$\BestBStaircase(i)$ finishes before reaching $\pi_{r+1}^{2b+1}(m)$, see Figure
\ref{fig:range-of-middle-ptr}. Thus, for each $r$, every element
belongs to $\Oh(b)$ ranges charged to the elements from $E_{r+1} \setminus E_{r+2}$.
The total length of all such ranges is therefore $\Oh(nb)$.
Recall that there are $\Oh(3^r) = \Oh(3^t) = \Oh_{\epsilon}(1)$ pointers on each
level $r$, so summing over all $r \le t$ we obtain that the running time of
Algorithm~\ref{alg:process_with_b_best} is
$\Oh(\frac{1}{\epsilon} \cdot 3^t \cdot nb^2)=\Oh_{\epsilon}(n^{1+2\epsilon})$.
By adjusting $\epsilon$, the running time becomes $\Oh_{\epsilon}(n^{1+\epsilon})$.
\end{proof}

\FIGURE{h}{1}{range-of-middle-ptr}{
  The maximum potential traversal range for a~middle $m$ of a~$b$-staircase
  ($b = 3$) of size~$r$. The~$b$-staircase of size $r$ with the middle element
  $m$ is drawn with a~bold line and the~almost $b$-staircase of size $r+1$ with
  new middle $m_2$ is drawn with a~solid line.
}

\section{Inverting permutation in smaller space}\label{se:inverting-b}

In this section, we extend the results of Section \ref{se:fich-b}
to obtain an $\Oh(n^{1+\epsilon})$-time algorithm for
inverting permutations using $\Oh_\eps(\log n)$ bits. As this builds
on the algorithm given by El-Zein et al.~\cite{El-ZeinMR16}, the reader
unfamiliar with their work is encouraged to first read Appendix
\ref{se:inverting-b1}
where we describe its simpler version.

The natural idea is to invert each cycle upon reaching its leader.
Unfortunately, after inverting a cycle for its leader $i$
we might obtain a cycle with leader $i'$ which might be greater than $i$ and
we do not want to invert this cycle again.

After inverting a~cycle, the middle of the best $b$-staircase remains
the same and the only best $b$-staircase on the inverted cycle is
exactly the reverse of the previous best $b$-staircase. We can compute
the end $i'$ of the largest $b$-staircase constructed from $i$ by implementing
$\GetEnd$ method of $\Ptr$ returning the end of the staircase as follows:
$p.\GetEnd() = p.z.\GetEnd()$ if $p.z\ne\NULL$ and $p.\GetEnd()=p.e$ otherwise.

\begin{definition}
  A~cycle $C$ of $\pi$ is~\emph{hard} if~for the best $b$-staircase
  $(i, i_2, i_3, \dots, i_r, m, j_r, \dots, j_3, j_2, i')$ we have $i' > i$.
  Otherwise, $C$ is \emph{easy}.
\end{definition}

For $x$ on a~cycle and $y = \pi(x)$, by setting $\pi(x) \gets \perp$
we obtain a~path $P$ with $y$ as its first element and $x$ as the last
element. We call this operation \emph{cutting before $y$}.
For hard cycles, we are going to cut the inverted cycle before its new leader
and store null at the end of the path.
As in \cite{El-ZeinMR16}, we use multiple types of nulls that encode some
additional information. They can be efficiently simulated in our model using the
idea of El-Zein et al.~\cite{El-ZeinMR16}. For completeness, a proof can
be also found in Appendix~\ref{se:inverting-b1}.

\begin{lemma}[\cite{El-ZeinMR16}]
 For any $k\in[n]$, it is possible to simulate an extension of the range of stored values of
 $\pi$ from $[n]$ to $[n] \cup \{\perp_1,\ldots,\perp_k\}$ with each value from
 $[n]$ occurring at~most $c$ times in $\pi$ using $\Oh(ck\log n)$ bits with
 $\Oh(c)$-time overhead.
 \label{lemma:nulls-extension-paper}
\end{lemma}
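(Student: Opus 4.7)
The plan is a direct simulation. I would reserve $k$ distinguished values $v_{1},\ldots,v_{k}\in[n]$ (taking $v_{j}=j$ for concreteness) and represent $\perp_{j}$ by storing the raw value $v_{j}$ at the corresponding position of $\pi$. To distinguish a legitimate occurrence of $v_{j}$ from a sentinel encoding $\perp_{j}$, for each $j\in[k]$ I would maintain a small auxiliary list $L_{j}$ that records the positions $p$ at which $\pi[p]$ currently holds the \emph{real} value $v_{j}$. The multiplicity hypothesis guarantees $|L_{j}|\le c$, so across all $j$ the lists use $\Oh(ck)$ entries of $\Oh(\log n)$ bits each, i.e., $\Oh(ck\log n)$ bits in total.

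For a read of $\pi[p]$, I first fetch the raw value $v$ stored at position $p$. If $v\notin\{v_{1},\ldots,v_{k}\}$ (an $\Oh(1)$ check under $v_{j}=j$), return $v$. Otherwise $v=v_{j}$ for a unique $j$, and I scan $L_{j}$ in $\Oh(c)$ time: if $p\in L_{j}$, return $v_{j}$; if not, return $\perp_{j}$.

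For a write $\pi[p]\gets x$, I first read the old raw value $v_{\text{old}}$ at $p$ and, if it equals some $v_{j'}$, remove $p$ from $L_{j'}$ (taking $\Oh(c)$ time). Then I store the new value: if $x\in[n]$, write $x$ to position $p$ and, if additionally $x=v_{j}$, append $p$ to $L_{j}$; if $x=\perp_{j}$, write $v_{j}$ to position $p$ without updating any list. Every step is $\Oh(c)$, giving the claimed overhead.

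There is no deep obstacle, only careful bookkeeping: correctness reduces to the invariant that $p\in L_{j}$ holds if and only if $\pi[p]$ currently stores a real $v_{j}$ rather than $\perp_{j}$, and each case of read/write above preserves it. The only ``global'' assumption used is the hypothesis that each value from $[n]$ appears at most $c$ times in $\pi$, which is precisely what bounds $|L_{j}|\le c$ and thus yields both the space bound $\Oh(ck\log n)$ and the $\Oh(c)$ time overhead.
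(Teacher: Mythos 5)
Your simulation is correct and coincides with the paper's proof: both reuse the values $1,\ldots,k$ as sentinels for $\perp_1,\ldots,\perp_k$ and keep, for each such value $x$, a list of length at most $c$ recording the positions where the \emph{real} value $x$ is stored, scanning that list on reads and writes to disambiguate in $\Oh(c)$ time within $\Oh(ck\log n)$ bits. No substantive difference from the paper's argument.
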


\begin{definition}
  \emph{Sigma} is~a~function $\pi : C \rightarrow C$, such that there
  is at~least one element $y \in C$ with $\{\pi^k(y)\ :\ k \in \{0, 1, 2, \dots\}\} = C$.
  We call sigma \emph{proper} if it is not a~cycle.
\end{definition}

Depending on the number of times $k$ an element that looks like a~possible
leader was found for a~cycle, we can be in one of the following states:
\begin{itemize}
  \item not inverted cycle (for $k = 0$),
  \item inverted path (for $k = 1$ and only if the cycle was hard),
  \item inverted sigma (for easy cycles if $k = 1$ and for hard cycles if
  $k \ge 2$).
\end{itemize}

We use the name \emph{component} for both paths and proper sigmas.
In our algorithm, we always have $\pi$ consisting only cycles and components.
For a~path we call the element without the predecessor the~\emph{start}
of the~path and the element pointing to null the~\emph{end} of the~path.
Similarly, in a~proper sigma we call element $y_1$ without
predecessor the~\emph{start} of the sigma and among the two elements $x_1$,
$x_2$ such that $\pi(x_1) = \pi(x_2) = y_2$ satisfying
$\len{y_1}{x_1} < \len{y_1}{x_2}$, $x_2$ is called the~\emph{end} and $y_2$ the
\emph{intersection} of the sigma.
A part of sigma before the intersection (from $y_1$ to $x_1$) is called
the~\emph{tail} and the rest (a~cycle with $y_2$ and $x_2$) is the~\emph{loop} of
sigma.

Notice that if we point the end of a component to its start, then it
becomes a~cycle.
We call such operation \emph{fixing} a component and the component is \emph{fixed}.

As we observed earlier, $b$-staircases constructed on cycles may be
self-overlapping, but staircases constructed on paths may not. For sigmas, the
situation is not that simple:
\begin{itemize}
  \item from the perspective of an~element on the loop of a~sigma, it is not
  easy to~recognise if~it belongs to~a cycle or~to~the loop of~a~sigma, and
  so we allow self-overlapping staircases in this case (as~in any cycle),
  \item from the perspective of an~element on the tail of a~sigma, it is
  possible to access the intersection of sigma (we show how later in this
  section), and we disallow traversing any element twice by pretending that the
  connection from the end to the intersection of the sigma points to
  a~null, effectively being the end of a path.
\end{itemize}

\begin{definition}
  The \emph{leader} of a component is the element that would be the leader in
  its fixed component.
\end{definition}

\begin{definition}
  The \emph{rank} of~element $i$ on~a~path or on~a~tail of a~sigma in $\pi$
  is~the size of the best $b$-staircase from $i$. If there is no~best
  $b$-staircase from $i$ then its rank is~undefined.
\end{definition}

\begin{definition}
  An element $i$ is \emph{outstanding} if has the largest rank among the elements
  on its component.
\end{definition}

Our algorithm follows the left-to-right framework of
Algorithm~\ref{alg:left_to_right}.
It maintains the following invariants:
\begin{enumerate}
  \item the start of a component is its leader (and is outstanding),
  \item the intersection of a~sigma is the only element from which there is
  a~best staircase on the loop of the~sigma (and is already processed).
\end{enumerate}

The implementation of the procedure inverting permutations is provided in
Algorithm~\ref{alg:process_with_b_best_inverse}.
Intuitively, the algorithm acts differently depending if the element $i$ is on
a cycle, a path or the tail of a sigma.
It would be convenient to detect it at the beginning of processing the element
$i$, but this would take too much time.
However, all the cases start with computing $\BestBStaircase(i)$ and terminate
if the computation failed or aborted.
As we can only find the best $b$-staircase after processing the whole cycle or
component, we can augment this procedure to (if the best $b$-staircase from $i$
exists) also return the element $a$ which is either predecessor of $i$ (for $i$
on the cycle) or the end of the component (for $i$ on a path or the tail of a
proper sigma).
This is not obvious only for proper sigmas and we use the following lemma:
\begin{lemma}
  Given an element $i$ on a sigma, it is possible to find the intersection of
  the sigma or report that $i$ is on the loop in $\Oh(\ell+p)$ time, where
  $\ell$ and $p$ are lengths of loop and tail of the sigma, respectively.
  \label{lemma:intersection-of-sigma-paper}
\end{lemma}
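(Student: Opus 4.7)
The plan is to adapt Floyd's classical tortoise-and-hare cycle detection to the sigma structure, using only a constant number of auxiliary words. I would start two pointers at $i$, advancing one by a single application of $\pi$ and the other by two applications at each step, until they coincide at some element $m$. Because both pointers enter the loop within at most $p$ steps and $\pi$ restricted to the loop is a cyclic permutation, the meeting occurs within $\Oh(\ell+p)$ total applications of $\pi$, and $m$ is guaranteed to lie on the loop. Traversing $\pi$ from $m$ until returning to $m$ then yields the loop length $\ell$ in $\Oh(\ell)$ time, after which $i' := \pi^{\ell}(i)$ is computed in another $\Oh(\ell)$ steps.

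If $i' = i$, then $i$ lies on the loop, which is reported. Otherwise $i$ sits on the tail at some distance $d \le p$ from the intersection $y_2$, and I place two pointers at $i$ and at $i'$ and advance them synchronously. I would argue that the first collision occurs exactly at $y_2$ after $d$ steps: the first pointer reaches $\pi^{d}(i) = y_2$, while the second reaches $\pi^{d}(i') = \pi^{d+\ell}(i) = \pi^{\ell}(y_2) = y_2$, using that $\pi^{\ell}$ is the identity on the loop. No earlier collision is possible because, until that moment, the first pointer is strictly on the tail at some nonzero distance from $y_2$, while the second is either on the tail at a different offset or already on the loop; these two parts of the sigma are disjoint away from $y_2$. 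This final phase costs $\Oh(p)$ time, giving a total running time of $\Oh(\ell+p)$.

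The only delicate point I anticipate is the corner case in which $i$ already lies on the loop, which the test $\pi^{\ell}(i) = i$ handles cleanly; this is also why the loop-length measurement cannot be skipped even when no intersection is ultimately reported.
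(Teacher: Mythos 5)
Your proposal is correct and follows essentially the same route as the paper: Floyd's tortoise-and-hare to find a meeting point on the loop, a traversal to measure $\ell$, and then two synchronised pointers offset by $\ell$ steps that first collide at the intersection, all within $\Oh(\ell+p)$ steps. The only cosmetic difference is how the on-loop case is recognised (you test $\pi^{\ell}(i)=i$, the paper checks whether $i$ is visited while measuring the loop length), which is equivalent in both correctness and cost.
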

This was described by El-Zein et al.~\cite{El-ZeinMR16}, and for completeness,
we provide the proof and implementation of $\FindIntersection$ in
Appendix~\ref{se:tortoise-and-hare}. We assume that $\FindIntersection$ can be
also run on paths and might abort during the evaluation of $\pi$. In this case,
it returns $a$ for which $\pi(a) = \perp$.
If we started from the tail of a~sigma, we can find its the end by using
Lemma~\ref{lemma:intersection-of-sigma-paper} and traversing the loop once more
to find the predecessor on the loop. Now we describe how the algorithm handles
all the cases mentioned earlier.

When the leader $i$ of a cycle is found for the first time, the algorithm finds
the end $i'$ of the corresponding $b$-staircase and inverts the cycle, so $i'$
becomes the leader of the inverted cycle. If $i' > i$, then the cycle is hard,
and we change it into a path by cutting before $i'$.
The rank of $i'$ on the~path is stored in the type of the null.

The~path remains in this state at least until we consider one of its outstanding
elements.
To check if $i$ on the~path is outstanding, we first check if it is
possible to~create the best $b$-staircase from $i$. When the best $b$-staircase
is found on the path, we check if the rank of $i$ matches the one
saved in the null and if $i$ is the leader on the loop of sigma created by
pointing the end of the path to $i$.
If this is the case, we preserve the change, otherwise undo.

If the construction of $b$-staircase starts from an element
$i$ on the loop of a~proper sigma, there is no easy way to distinguish it from
a~cycle.
Thus, it is not clear if we can use Algorithm~\ref{alg:best_bstaircase} to
verify if an element is outstanding. Fortunately, the rank of any outstanding
element on the~tail of sigma equals the
rank of the leader of the proper sigma.

If the algorithm finds an outstanding
element $i$, it creates a temporary, larger sigma, by pointing the end of the
sigma to $i$, then checks if there is the best $b$-staircase from $i$ on the
loop of sigma. If so, the change is in effect or is reverted otherwise. Because
of this check, we have no other elements than $i$ with the best
staircase on the loop of sigma.
See Figure \ref{fig:final-lifecycle}.

\FIGURE{t}{1}{final-lifecycle}{Lifecycle of a~hard cycle during the execution
of the algorithm (chronologically from left to right). Circles are the
elements from which there is a best staircase, squares are outstanding
elements of~each component and large nodes are elements that triggered the
component update.}

Let $c$ be the intersection of the sigma. From the invariants of our algorithm
we have that $c$ was an outstanding element in the component before its last
update and its rank $t$ in the previous component was equal to the rank of the
leader.
We can compute $t$ by running $\BestBStaircase(c)$ and then we can treat the
sigma as if it was a path by assuming that the end of the sigma points to
$\perp_t$ instead of $c$.

Notice that if we were always trying to find the intersection of sigma, this
could take much more time than the execution of $\BestBStaircase(i)$, which
often terminates early.
We also need to ensure that successful call of $\BestBStaircase(i)$ never
reaches the intersection of the sigma from the end (it needs to be earlier
updated to $\perp_t$).
We execute procedures $\BestBStaircase(i)$ and $\FindIntersection(i)$
in parallel in such a way that for every step of the former procedure, we
execute a constant number of steps of the latter.
Additional time spent on finding intersections is bounded by a constant
multiple of the time of $\BestBStaircase(i)$ and can be omitted in the
complexity analysis.

\begin{algorithm}[t]
\begin{algorithmic}[1]
  \Function{$\Process$}{$i$}
    \State $(p, a) \gets \BestBStaircase(i)$ \label{li:staircase_line}
    \If{$p \ne \NULL$}
      \If{$\pi(a) = i\textbf{ and }\Min(i,p.e,a) = p.e$}
        \Comment $i$ is on a cycle.
        \State $i' \gets p.\GetEnd()$
        \State $a' \gets \pi(i')$
        \Comment $a'$ is the predecessor of the new leader in the inverted cycle.
        \State $\InvertCycle(i)$
        \If{$i' > i$}
          \State $\pi(a') \gets \perp_?$
          \State $(p', \_) \gets \BestBStaircase(i')$ \label{li:staircase_a}
          \State $r' \gets p'.r$
          \Comment $r'$ is the rank of the new leader in the path.
          \State $\pi(a') \gets \perp_{r'}$
        \EndIf
      \Else
        \If{$\pi(a) \in [n]$}
          \Comment $i$ is on a tail of sigma.
          \State $(p', \_) \gets \BestBStaircase(\pi(a))$ \label{li:staircase_b}
          \State $w \gets \perp_{p'.r}$
        \Else
          \Comment $i$ is on a path.
          \State $w \gets \pi(a)$
        \EndIf
        \If{$\perp_{p.r} = w$}
          \State $s_a \gets \pi(a)$
          \State $\pi(a) \gets i$
          \State $(p'', \_) \gets \BestBStaircase(i)$ \label{li:staircase_c}
          \If{$p'' = \NULL\textbf{ or }\Min(i,i) \neq p''.e$}
            $\pi(a) \gets s_a$
          \EndIf
        \EndIf
      \EndIf
    \EndIf
  \EndFunction
\end{algorithmic}
\caption{Final version of the function to process an element in order to
invert a permutation.}
\label{alg:process_with_b_best_inverse}
\end{algorithm}

The correctness of the algorithm follows from the two invariants listed
earlier. Each cycle is inverted exactly once when the leader of the cycle is
found.
If the cycle was easy, processing elements on the inverted cycle will not
find any leaders.
If the cycle was hard, it is replaced with a~path starting at the leader of the
inverted cycle (invariant 1 is satisfied).
On a~path, there might be several outstanding elements, any might
potentially trigger the replacement of the path into a~sigma.
The proper sigma starts with the same element as~the path from which it was
constructed and after enlarging a~proper sigma to~another proper one, its start
does not change, so invariant 1 holds.
Because created intersections are always chosen to be outstanding elements
on the tail of sigma and we later check if they allow creating the best staircase
on the loop of the enlarged sigma, invariant 2 also holds.
Sigma can be enlarged many times, but it becomes a~cycle when its start is
processed (this follows immediately from invariants 1 and 2). The resulting
cycle is the same as it was just after inverting and before cutting it into
the path, so there are no more leaders to be found.
Note that elements can become outstanding only when a~path is created
and every element on the loop of sigma is not outstanding, so the
number of outstanding elements always strictly decreases whenever a new sigma
is constructed.
To estimate the running time, we state the following property:

\begin{lemma}
  There are $\Oh(b)$ outstanding elements on every path.
  \label{lemma:outstanding-count}
\end{lemma}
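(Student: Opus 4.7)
The plan is to associate every outstanding element $i$ on the path with the middle $m_i \in E_{r+1}$ of its best $b$-staircase, where $r$ denotes the maximum rank attained on the path. Because the start of a $b$-staircase is recovered from its middle via $i = \pi_1^{-b}(\pi_2^{-b}(\dots \pi_r^{-b}(m_i)\dots))$, the map $i \mapsto m_i$ is injective, so it suffices to bound the number of possible middles.

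I would then split according to the size of $E_{r+1}$ on the path. If $|E_{r+1}| \le b$ the bound is immediate, so the work lies in the case $|E_{r+1}| > b$. In that case every almost $b$-staircase of size $r+1$ on the path is automatically proper, so the condition defining outstanding strengthens from ``no proper almost $b$-staircase of size $r+1$'' to the sharper ``no almost $b$-staircase of size $r+1$ at all.'' Since the entire left part $i_1, \ldots, i_{r+1} = m_i$ of any candidate size-$(r+1)$ staircase from $i$ is already inherited from the existing size-$r$ staircase, the obstruction must lie on the right: either $m' := \pi_{r+1}^{2b}(m_i)$ is itself undefined on the path, or one of the forward steps $\pi_r^b, \pi_{r-1}^b, \ldots, \pi_1^b$ applied in succession to $m'$ runs off the end of the path.

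The main counting step is to show that each failure mode contributes $\Oh(b)$ middles. Middles with $m'$ undefined lie in the last $2b$ positions of $E_{r+1}$, giving at most $2b$. For a failure at level $k \in [r]$, there is an obstructing point $y_k \in E_{k+1}$ with fewer than $b$ forward $E_k$-neighbours on the path; such $y_k$ must occupy one of the last $b$ positions of $E_k$, so there are at most $b$ of them per level. Moreover $y_k$ determines $m'$ uniquely by the back-computation $m' = \pi_r^{-b}(\pi_{r-1}^{-b}(\dots \pi_{k+1}^{-b}(y_k) \dots))$ (and no corresponding $m'$ exists if this back-computation leaves $E_{r+1}$), and then $m_i = \pi_{r+1}^{-2b}(m')$. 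Summing the contribution $\Oh(b)$ over the $r \le t = \Oh_\eps(1)$ levels yields the claimed bound of $\Oh(b)$.

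The main obstacle is isolating the failure of the forward extension cleanly. One needs to argue that no failure can be caused on the left of $m_i$, so that every outstanding middle is truly forced toward the right end of the path at some level, and that failures at different levels $k$ can be charged independently to at most $b$ distinct obstructing points per level via the back-computation. Once this structural picture is in place, the bound follows by a routine summation over the $\Oh_\eps(1)$ levels.
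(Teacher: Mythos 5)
Your proof is correct, but it takes a genuinely different route from the paper's. The paper argues by contradiction and ordering: assuming $2b+2$ outstanding elements $i_1,\ldots,i_{2b+2}$ of rank $t$, their pairwise-distinct middles interleave with the starts along the path, so from $i_1$ one can build a \emph{proper} almost $b$-staircase of size $t+1$ (proper because at least $b+1$ distinct middles force $|E_{t+1}|>b$; existing because the later middles leave enough room before the end of the path), contradicting that the rank of $i_1$ is $t$; this gives the sharp bound $2b+1$ with an absolute constant. You instead count directly: after injecting outstanding elements into their middles in $E_{r+1}$ (the injectivity via the backward maps is sound, since each $\pi_k$ is injective on a path), you dispose of the case $|E_{r+1}|\le b$, and otherwise use that properness is automatic, so the only obstruction to a size-$(r+1)$ extension on a path is hitting $\perp$; each outstanding middle is then charged either to one of the last $2b$ elements of $E_{r+1}$, or, injectively via the back-computation, to one of the last $b$ elements of $E_k$ for the level $k$ at which the descent fails. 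Your flagged worry about failures on the left is indeed resolved as you assert: the left part through $i_{r+1}=m_i$ is inherited verbatim from the existing size-$r$ staircase, so only the steps $\pi_{r+1}^{2b}$ and the subsequent descent are new. The trade-off: your charging argument is more mechanical and avoids the paper's terse ``it can be verified'' interleaving claim, but it yields a bound of order $rb$, i.e.\ a constant depending on $\eps$ (since $r<1/\eps$), rather than the paper's absolute $2b+1$; this weaker form is still sufficient for the theorem, whose running-time bound is $\Oh_\eps(\cdot)$ in any case.
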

\begin{proof}
  Assume there are (at~least) $2b + 2$ outstanding elements, denoted
  $i_1, i_2, \dots, i_{2b+2}$ in this order on the path, each of rank $t$.
  Let $m_k$ be the middle of $b$-staircase from $i_k$. $m_{k}$s are
  pairwise-distinct, and it can be verified that
  the elements appear in the following order on the path:
  $i_1, m_1, i_2, m_2, \dots, i_{2b+2}, m_{2b+2}$.
  Consequently, we can construct a proper almost $b$-staircase of size $t+1$
  from $i_1$, with the middle $m_{b+1}$ and ending before $m_{2b+2}$ on the
  path. This contradicts the assumption that $i_1$ is of rank $t$, as either it is of
  rank at least $t+1$ or there is no best $b$-staircase from $i_1$.
\end{proof}

\begin{theorem}
  For every $\eps>0$, there exists an algorithm for inverting a permutation
  $\pi$ on $n$ elements in $\Oh_\eps(n^{1+\epsilon})$
  time using $\Oh_\eps(\log n)$ additional bits of space.
\end{theorem}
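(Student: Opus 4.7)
The plan is to verify the three requirements: correctness, space, and running time, building on the machinery already in place.

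Correctness has essentially been established in the discussion preceding the theorem via the two invariants: the start of every component is its leader, and the intersection of every sigma is the unique element on its loop admitting a best $b$-staircase. I would formalize this by checking, case by case, that each branch of $\Process$ in Algorithm~\ref{alg:process_with_b_best_inverse} preserves both invariants. Lemma~\ref{lemma:intersection-of-sigma-paper} supplies the intersection so a sigma's tail can be handled as if it were a path terminating in the correctly typed null; this is what allows the outstanding-element test on the tail to be reduced to the same ``is this a best $b$-staircase'' check used on paths. Because each cycle's leader is visited exactly once by the left-to-right scan and any subsequent path/sigma states are resolved back into the inverted cycle by the time their start is processed, every cycle is inverted exactly once.

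For the space bound, the $\Ptr$ hierarchy uses $\Oh_\eps(\log n)$ bits just as in Theorem~\ref{thm:alg_rep_leaders}. The only new ingredient is the family of typed nulls $\perp_r$ encoding the rank of a component's leader. Since any rank is at most $t+1 = \Oh_\eps(1)$, there are $\Oh_\eps(1)$ null types, and at most one null is present per component simultaneously, so applying Lemma~\ref{lemma:nulls-extension-paper} with $k = \Oh_\eps(1)$ and $c = \Oh(1)$ contributes $\Oh_\eps(\log n)$ additional bits at only an $\Oh(1)$ time overhead per access to $\pi$.

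For the running time, I would argue that each of the $n$ invocations of $\Process$ makes only a constant number of $\BestBStaircase$ calls (lines~\ref{li:staircase_line}, \ref{li:staircase_a}, \ref{li:staircase_b}, \ref{li:staircase_c}) and, as noted in the text, an interleaved $\FindIntersection$ adds at most a constant factor. Each $\BestBStaircase$ call follows the same charging scheme as in Theorem~\ref{thm:alg_rep_leaders}: its traversal is paid for against the $(r,m)$ pair associated with the middle of the staircase it attempts to build. The additional physical work of $\InvertCycle$ is $\Oh(L)$ per cycle of length $L$ and occurs at most once per cycle, contributing $\Oh(n)$ overall. Tuning $\eps$ by a constant factor then absorbs the constant multiplicative blow-up and yields $\Oh_\eps(n^{1+\eps})$.

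The main obstacle, I expect, is justifying that the extra $\BestBStaircase$ calls started from points other than the currently processed $i$ (namely $i'$, $\pi(a)$, and $i$ after a tentative modification) still fit the amortization of Theorem~\ref{thm:alg_rep_leaders}. Here Lemma~\ref{lemma:outstanding-count} is essential: it bounds the number of outstanding elements on a path by $\Oh(b)$, and the fact that each path$\to$sigma or sigma-enlargement event strictly decreases the number of outstanding elements bounds the number of such transitions per original cycle by $\Oh(b)$. Consequently, the extra $\BestBStaircase$ and $\FindIntersection$ invocations triggered by these events target the same middles and levels as the base scan charges, so the total work they contribute remains within a constant factor of the $\Oh_\eps(n^{1+\eps})$ bound of Theorem~\ref{thm:alg_rep_leaders}.
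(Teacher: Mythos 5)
Your proposal follows essentially the same route as the paper: correctness via the two component invariants, space via Theorem~\ref{thm:alg_rep_leaders} together with Lemma~\ref{lemma:nulls-extension-paper} for the $\Oh_\eps(1)$ null types, and running time by reusing the charging scheme of Theorem~\ref{thm:alg_rep_leaders} on each cycle/component, with Lemma~\ref{lemma:outstanding-count} bounding the number of path/sigma transitions per cycle by $\Oh(b)$ and with $\Min$, $\InvertCycle$ and the interleaved $\FindIntersection$ dominated by already-charged traversals. The only caveat is your claim that the extra work from these transitions is a \emph{constant} factor over the Theorem~\ref{thm:alg_rep_leaders} bound: since each of the $\Oh(b)$ intermediate components can separately incur the full per-component charge, the overhead is an $\Oh(b)=\Oh(n^{\eps})$ multiplicative factor, which is precisely what your final re-tuning of $\eps$ (and the paper's ``for appropriate $\epsilon$'') absorbs, so the conclusion stands.
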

\begin{proof}
  We estimate the running time of our algorithm.
  Any cycle is transformed $\Oh(b)$ times to intermediate components until it
  becomes a~cycle with its leader already considered by earlier $\Process$ call
  (from Lemma~\ref{lemma:outstanding-count}).

  We bound the running time in each cycle and intermediate component separately:
  the total time spent on checking for all elements of the structure if there is
  a~best $b$-staircase from each of these elements (line \ref{li:staircase_line}
  of Algorithm~\ref{alg:process_with_b_best_inverse}) is always no larger than
  the time it would take in the~fixed component. $\Min$ and $\InvertCycle$ are
  executed only for elements for which a~cycle was already traversed, so they do
  not influence the time complexity.
  An additional run of $\BestBStaircase$ (line \ref{li:staircase_a}) is executed
  only once for a cycle, but for a~start of best $b$-staircase on a~tail of
  sigma (lines \ref{li:staircase_b} and \ref{li:staircase_c}) may happen many
  times. These additional checks are only recomputing the work already
  done for the previous component, as the only elements reached are on~the loop
  of~the sigma.
  Overall, the time complexity is $\Oh_\eps(n^{1+\epsilon})$ for appropriate
  $\epsilon$.
  The amount of additional space follows from the proof of
  Theorem~\ref{thm:alg_rep_leaders}.
\end{proof}

\bibliography{biblio.bib}

\appendix

\section{Leader election by Fich et al.}\label{se:fich}

In this section, we describe an algorithm of cycle leader problem by Fich et al.
\cite{FichMP95}.
Compared to the original description, we introduce new notation and change some
details to make further algorithms easier to describe.
We stress that this should not be considered as new material and is provided only
for convenience.

Fich et al. \cite{FichMP95} consider sets of local minima on the cycles of
$\pi$. Let $E_1 = [n]$ and $\pi_1=\pi$.
For $r>1$, $E_r\subset E_{r-1}$ is defined inductively as the set of local
minima in $E_{r-1}$ following the order of permutation $\pi_{r-1}$, that is
$E_r=\{i\in E_{r-1} : \pi_{r-1}^{-1}(i)>i<\pi_{r-1}(i)\}$.
Similarly, $\pi_r : E_r \rightarrow E_r$ is the permutation that maps each
element of $E_r$ to the next element of $E_r$ that is encountered when following
the permutation $\pi_{r-1}$.
Note that at most every second element of $E_{r-1}$ can belong to $E_r$ and
hence $|E_r|\leq|E_{r-1}|/2$.
For a cycle $C$ of $\pi$, there exists $t \leq \lceil \log n \rceil$, such that
$E_{t+1} \cap C = \{x\}$, where $x$ is the minimum element on $C$. Of course,
$E_{t+2} \cap C = \emptyset$.

\FIGURE{h}{0.4}{fich-cycle}{
  Example of a single cycle of $\pi$. The number of circles around each vertex
  indicates the set $E_1$, $E_2$, $E_3$ or $E_4$ the vertex belongs to. Dashed
  edges denote $\pi_2$ and dotted edges denote $\pi_3$.
}

To simplify the presentation, we always restrict our analysis to a single
cycle. It generalizes easily to the original problem, as all presented
algorithms in this paper follow the pattern of Algorithm~\ref{alg:left_to_right}
and during execution of $\Process(i)$, we consider only the elements on the
cycle containing $i$.
Let us proceed even further and also assume that we are
dealing with paths. This will be needed in sections about inverting
permutations.

\begin{definition}
  A \emph{path} of $\pi$ is a~partial function that can be obtained
  from a~cycle $C$ of $\pi$ by replacing exactly one value of
  $\pi(x)$ to $\perp$ (for $x \in C$).
\end{definition}

On a~path, $\pi_r$ is undefined for the last element $\ell_r\in E_r$.
When deciding if an element is a local minimum on a~path, we disregard
comparisons with undefined elements and we assume that $\pi_r(\perp) = \perp$
for every $r$.

Because it is easy to compute $\pi_r(i)$ but difficult to compute
$\pi_r^{-1}(i)$, we define the leader of a cycle to be the element $i$ such that
$m = \pi_t(\pi_{t-1}(\dots(\pi_2(\pi_1(i)))\dots))\in E_{t+1}$ for $t$ such
that $|E_{t+1}| = 1$ and where $m$ is the minimum on the cycle.

In other words, to check if $i$ is the~leader, we need to inspect elements $i$,
$\pi_1(i)$, $\pi_2(\pi_1(i))$, $\dots$,
$m = \pi_t(\pi_{t-1}(\dots(\pi_2(\pi_1(i)))\dots))$ and check if they belong to
$E_1$, $E_2$, $E_3$, $\dots$, $E_{t+1}$ respectively.
Intuitively, the leader $i$ is the witness that $m$ is the minimum of its cycle.

We illustrate the definition on the example cycle in Figure
\ref{fig:fich-cycle} with $t=3$.
We~can quickly find that 13 is not the leader, because
$\pi_2(\pi_1(13)) = 7 \not \in E_3$.
The leader of the cycle is 12 as $\pi_1(12) = 10 \in E_2$,
$\pi_2(10) = 4 \in E_3$ and $\pi_3(4) = 3 \in E_4$.

\begin{definition}
 A \emph{staircase} of size $r$ from $i$ is a sequence of elements
 $(i=i_1,i_2,\ldots,i_{r+1}=m=j_{r+1},j_r,j_{r-1},\ldots,j_1=i')$ such that
 $i_k,j_k\in E_k$, for $k\in[r+1]$ and $i_{k+1}=\pi_k(i_k)$ and
 $j_k=\pi_k(j_{k+1})$ for $k\in[r]$.
 Elements $i$, $m$ and $i'$ are~called \emph{the start}, \emph{the middle}
 and \emph{the end} of the staircase, respectively.
 Part of the staircase from the start to the middle is called
 \emph{left part} of the staircase. Similarly, \emph{right part} of the
 staircase is the part from the middle to the end of the staircase.
\end{definition}

In the example from Figure~\ref{fig:fich-cycle}, the staircase of size $3$ from $12$ is the sequence
$(i = 12, 10, 4, m = 3, 4, 7, i' = 9)$.
See Figure~\ref{fig:straight-staircase} for a larger example.

The algorithm constructs staircases of the largest possible size from each
element.
It might happen that the constructed staircase is longer than a full cycle,
because it visits some elements twice
(once before and once after reaching the middle of the staircase). However, we
notice that all elements occur at most once in each part (left or right) of
the staircase.
This is the case for the staircase of size 3 from 12 in Figure~\ref{fig:fich-cycle}, where the part of cycle between 12 and 9 was visited twice.
As~we may have to work on a~path, we~avoid constructing staircases with
undefined elements. Of course, for paths, it is not possible to traverse any of its
elements more than once.

\FIGURE{h}{0.98}{straight-staircase}{
  A staircase of size 3 from $9$ to $19$ with the middle element $1$ is marked bold.
  Curves of increasing heights represent $\pi_1$,
  $\pi_2$ and $\pi_3$ respectively.
}

As staircases are crucial objects that we use in all the presented algorithms,
we summarize their definition and naming in the more concise form:
$$
\arraycolsep=2.2pt
\def\arraystretch{1.0}
\begin{array}{rcccccccccccccccl}
  \rin{E_1 ~~} & & \rin{E_2 ~~} & & & & ~ \rin{E_{r+1}} & & \rin{E_r ~~} & & \rin{E_{r-1}} & & & & \rin{E_2 ~~} & & \rin{E_1 ~~} \\
  i=i_1 & \myarrow{1} & i_2 & \myarrow{2} & \cdots & \myarrow{r} & i_{r+1}=m=j_{r+1} & \myarrow{r} & j_r & \myarrow{r-1} & j_{r-1} & \myarrow{r-2} & \cdots & \myarrow{2} & j_2 & \myarrow{1} & j_1 = i' \\
  \textrm{start} & & & & & & \textrm{middle} & & & & & & & & & & \textrm{end} \\
\end{array}
$$

The heart of the algorithm of Fich et al. is the procedure $\Next(r)$, which computes the successive
elements of $\pi_r$ recursively, using the successive elements of $\pi_{r-1}$.
It operates on table $\elbow$ consisting of elements from the cycle and changes it during the execution.
The algorithm is presented in Algorithm~\ref{alg:next}, with the only our modification in Line~\ref{li:abort}, which aborts when attempting to pass the end of the path.
Before we describe the procedure in detail, we start with presenting the properties of the table $\elbow$ that are satisfied before and after execution of $\Next(r)$.

\begin{algorithm}[t]
\begin{algorithmic}[1]
  \Function{$\Next$}{$r$}
  \If{$r=1$}
    \If{$\pi(\elbow[1]) = \perp$}\label{li:abort}
      \textbf{abort}
    \EndIf
    \State $\elbow[0] \gets \pi(\elbow[1])$
    \State \Return
  \EndIf
  \While{$\elbow[r-1]<\elbow[r-2]$}
    \State $\elbow[r-1] \gets \elbow[r-2]$
    \State $\Next(r-1)$
  \EndWhile
  \While{$\elbow[r-1]>\elbow[r-2]$}
    \State $\elbow[r-1] \gets \elbow[r-2]$
    \State $\Next(r-1)$
  \EndWhile
 \EndFunction
\end{algorithmic}
\caption{A recursive procedure that rearranges the $\elbow$ array and computes
$\pi_r(\elbow[r])$.}
\label{alg:next}
\end{algorithm}

The procedure operates on table \textit{elbow} which prior to calling
$\Next(r)$ satisfies:
\begin{equation} \label{eqn:fich_assumption}
  E_r\ni\elbow[r]=\elbow[r-1]\myarrow{r-1}\elbow[r-2]\myarrow{r-2}\cdots\myarrow{2}\elbow[1]\myarrow{1}\elbow[0]
\end{equation}
which is a shorthand for
$\elbow[r-1]=\elbow[r]\in E_r$ and $\elbow[k-1]=\pi_k(\elbow[k])$ for
$1\leq k\leq r-1$.
Immediately after termination of $\Next(r)$ the table $\elbow$ satisfies:
\begin{equation}\label{eqn:fich_end_assumption}
  E_r\ni\elbow[r]\myarrow{r}\elbow[r-1]\myarrow{r-1}\elbow[r-2]\myarrow{r-2}\cdots\myarrow{2}\elbow[1]\myarrow{1}\elbow[0]
\end{equation}

Notice that only the element $\elbow[r]$ remains unchanged during the execution
of $\Next(r)$.
The main output of the function, $\pi_r(\elbow[r])$, is stored in $\elbow[r-1]$.
Observe that the $\elbow$ array contains exactly the right part of the staircase of size $r$ with the middle element $\elbow[r]$: $\elbow[k] = j_{k+1}$ for $k \in [r]$.
We use 0-based indexing for the $\elbow$ array to be consistent with the algorithm of Fich et al.

Due to the condition \eqref{eqn:fich_assumption}, $\elbow[r]\in E_r$
and hence all the subsequent elements of $\elbow$ satisfying the condition
\eqref{eqn:fich_end_assumption} are well-defined, as $\elbow[k]\in E_{k+1}\subset E_k$ for $k\in[r-1]$.
Furthermore, it is not possible to calculate the values in the left part of the
staircase based only on the $\elbow$ array, so we need to ensure that upon calling
$\Next(r)$, condition \eqref{eqn:fich_assumption} is satisfied.

In the first while loop in Algorithm~\ref{alg:next}, the procedure iterates over $\pi_{r-1}$ as long as the
corresponding values on the cycle are increasing.
As the cycle is nontrivial, the loop terminates when $\elbow[r-1]>\elbow[r-2]$.
Similarly, the second while loop iterates over decreasing values on $E_{r-1}$ and
terminates when $\elbow[r-1]<\elbow[r-2]$. Then $\elbow[r-1]$ is the next local
minimum on $\pi_{r-1}$ after $\elbow[r]$, and hence
$\elbow[r-1]=\pi_r(\elbow[r])$.

Observe that elbows create a nested
structure, that is for each $m$ and $r$ such that $m\in E_{r+1}$, the element
$\pi_1(\pi_{2}(\dots(\pi_r(m))\dots))$ appears before $\pi_{r+1}(m)$ on
the cycle. This follows by induction. Similarly, we can say that
$\pi_r(\pi_{r-1}(\dots(\pi_1(i))\dots))$ appears earlier than $\pi_{r+1}(i)$ on
the cycle.
See Figure~\ref{fig:recursive-elbows} for an example.

\FIGURE{h}{0.98}{recursive-elbows}{Nested structure of elbows. Each
$\pi_{r+1}(m)$ is $\pi_r^p(m)$ for some $p \ge 2$ and appears
after $\pi_{1}(\pi_{2}(\dots(\pi_r(m))\dots))$ when traversing from $m$.
The entries of the $\elbow$ table before(after) the execution of $\Next(r+1)$
procedure are marked with a bold(solid) line.
}

Now we introduce more definitions and properties of the staircases.

\begin{definition}
  An \emph{almost staircase} of size $r$ from $i$ is a sequence of elements
  $(i=i_1,i_2,\ldots,i_{r+1}=m=j_{r+1},j_r,j_{r-1},\ldots,j_1=i')$, such that
  $i_k,j_k\in E_k$, for $k\in[r]$ and $i_{k+1}=\pi_k(i_k)$ and
  $j_k=\pi_k(j_{k+1})$ for $k\in[r]$.
\end{definition}

Notice the difference between staircase and almost staircase is that in the
latter we do not require its middle element $m$ to belong to $E_{r+1}$.
This might happen due to $m=\pi_r(i_r)$ implying that $m \in E_r$, but not
necessarily $m \in E_{r+1}$.

\begin{definition}
  An almost staircase
  $(i=i_1,i_2,\ldots,i_{r+1}=m=j_{r+1},j_r,j_{r-1},\ldots,j_1=i')$
  is called \emph{proper} if $i_r \neq m$ (which is equivalent to $m \neq j_r$).
\end{definition}

On a cycle, it is always possible to extend a given staircase of size $r$ from
$i$ with the middle $m$, to an~almost staircase of size $r+1$ from $i$.
If the extension is not proper, then $i_r = m$, so $m$ is the only
element on $E_r$. In this case, the staircase is the largest
possible from $i$ and $m$ is the minimum of the cycle.

On a path, it might be not possible to extend a given staircase of size $r$ from
$i$ with the middle $m$ to an~almost staircase from $i$ of size $r+1$, because
of reaching $\perp$.
If the extension to the almost staircase of size $r+1$ from $i$ was not
possible, the staircase of size $r$ is the largest possible from $i$.
Notice that every almost staircase on a path is proper.

\begin{definition}\label{def:BestStaircase}
  Consider a~staircase from $i$ of size $t$.
  We call it the \emph{best staircase} from $i$ if there is no proper almost
  staircase of size $t+1$ from $i$.
\end{definition}

In the example cycle from Figure \ref{fig:fich-cycle}, it is possible
to construct the best staircase from 12: $(12, 10, 4, 3, 4, 7, 9)$,
but the staircase $(14, 7, 3, 6, 12)$ is not a~best staircase,
as it is possible to construct a proper almost staircase
$(14, 7, 3, 4, 3, 6, 12)$, which is not a~staircase.

\begin{algorithm}[t]
\begin{algorithmic}[1]
  \Function{$\BestStaircase$}{$i$}
    \State $\elbow[0] \gets \elbow[1] \gets i$
    \For{$r=1..$}
      \State $m \gets \elbow[r]$
      \If{\Next($r$) aborted}\label{line:check-aborted}
        \State \Return $m$ \Comment{Best staircase of the path is found.}
      \EndIf
      \State $m' \gets \elbow[r] \gets \elbow[r-1]$
      \If{\Next($r$) aborted}\label{line:check-aborted2}
        \State \Return $m$\Comment{Best staircase of the path is found.}
      \EndIf
      \State $m'' \gets \elbow[r-1]$\label{line:best-staircase-half-step}
      \If{$m' = m$}
        \State \Return $m$ \Comment{Best staircase of the cycle is found.}
        \label{line:best-staircase-cycle-returned}
      \EndIf
      \If{$m' < m$\textbf{ and }$m' < m''$}\label{line:best-check}
        \State $\elbow[r+1] \gets \elbow[r]$
        \Comment{Constructed proper almost staircase is a staircase.}
      \Else
        \State \Return \NULL
        \Comment{Constructed proper almost staircase is not a staircase. 
        }
      \EndIf
    \EndFor
  \EndFunction
\end{algorithmic}
\caption{Constructs the best staircase from $i$ (if any).}
\label{alg:BestStaircase}
\end{algorithm}

Algorithm~\ref{alg:BestStaircase} constructs the best staircase from $i$ or
returns $\NULL$ if there is no best staircase from $i$.
It constructs almost staircases of increasing sizes in subsequent
iterations, checks if these are staircases (line \ref{line:best-check})
and if so, updates the $\elbow$ table. The algorithm succeeds upon finding that
there are no larger almost staircases, in such a case the middle of the staircase
is returned.
The checks in line \ref{line:check-aborted} and \ref{line:check-aborted2} are
our modifications of original algorithm to make it working also for paths.

For example, procedure $\BestStaircase(9)$ executed on the part of the cycle from Figure
\ref{fig:straight-staircase}, constructs the following
staircases in the subsequent iterations of the main loop: $(9)$, $(9, 4, 10)$,
$(9, 4, 2, 7, 12)$ and $(9, 4, 2, 1, 3, 8, 19)$.

\begin{lemma}
  For every~cycle $C$ of permutation $\pi$ there is exactly one element
  $x \in C$, from which there is a~best staircase.
  \label{lemma:unique_cycle_leader_fich}
\end{lemma}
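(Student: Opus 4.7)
The plan is to pin down the unique leader by first determining what the size and middle of its best staircase must be, and then inverting the recursive construction to recover the start. Recall that on the cycle $C$ there is a well-defined level $t$ such that $E_{t+1}\cap C=\{m\}$ with $m=\min C$, and $E_{t+2}\cap C=\emptyset$. I claim the unique leader is
\[
x := \pi_1^{-1}\bigl(\pi_2^{-1}\bigl(\cdots\pi_t^{-1}(m)\bigr)\bigr),
\]
obtained by walking from $m$ backwards through $\pi_t^{-1},\pi_{t-1}^{-1},\ldots,\pi_1^{-1}$.

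For existence, walking backwards from $m$ produces the left part $i_1=x,i_2,\ldots,i_{t+1}=m$ of a candidate staircase, and walking forwards through $\pi_t,\pi_{t-1},\ldots,\pi_1$ produces its right part $j_{t+1}=m,j_t,\ldots,j_1$; every step is well-defined because each $\pi_k$ permutes $E_k$. This is a staircase of size $t$ from $x$, so I only need to rule out a proper almost staircase of size $t+1$ from $x$. Any such object would contain $i_{t+1}\in E_{t+1}$ and $i_{t+2}=\pi_{t+1}(i_{t+1})$ with $i_{t+1}\ne i_{t+2}$ by the properness condition, but both would have to lie in $E_{t+1}\cap C=\{m\}$, which forces $i_{t+1}=i_{t+2}=m$, a contradiction.

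For uniqueness, suppose $x'\in C$ also admits a best staircase, of size $t'$ and middle $m'\in E_{t'+1}\cap C$. If $|E_{t'+1}\cap C|\ge 2$ then $\pi_{t'+1}$ acts nontrivially on $E_{t'+1}\cap C$, so $m'\ne\pi_{t'+1}(m')$, and we can extend to a proper almost staircase of size $t'+1$ from $x'$ by taking the new middle $\pi_{t'+1}(m')$ and descending through $\pi_{t'+1},\pi_{t'},\ldots,\pi_1$, contradicting bestness. Hence $|E_{t'+1}\cap C|=1$, which by the uniqueness of this level forces $t'=t$ and $m'=m$, and then $x'=\pi_1^{-1}(\pi_2^{-1}(\cdots\pi_t^{-1}(m)\cdots))=x$. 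The step I expect to be most subtle is the extension-to-almost-staircase move, which depends on the observation that on a single cycle $\pi_r$ restricted to $E_r\cap C$ is either the identity (when that set is a singleton) or a single nontrivial cycle (otherwise); once this dichotomy is in hand, both directions of the argument close cleanly.
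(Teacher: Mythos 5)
Your proof is correct and follows essentially the same route as the paper's: it identifies the unique singleton level $E_{t+1}\cap C=\{m\}$, shows no proper almost staircase of size $t+1$ can exist there (so the staircase ending at middle $m$ is best), and recovers the unique start as $\pi_1^{-1}(\pi_2^{-1}(\cdots\pi_t^{-1}(m)\cdots))$, which is exactly the paper's argument, merely stated combinatorially rather than in terms of where Algorithm~\ref{alg:BestStaircase} returns. The auxiliary facts you lean on (uniqueness of the singleton level and the dichotomy for $\pi_r$ restricted to $E_r\cap C$) are immediate from the construction and are also taken for granted in the paper, so there is no gap.
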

\begin{proof}
  On cycles there is no $\perp$, so $\Next$ will never abort.
  This means that the only way for
  Algorithm~\ref{alg:BestStaircase} to return the best staircase is in line
  \ref{line:best-staircase-cycle-returned}, when the constructed almost
  staircase from $i$ is~not~proper, so $m$ is the only element on $E_{t+1}$,
  thus the minimum on the cycle.
  Indeed, there is the~best staircase from
  $i = \pi^{-1}_1(\pi^{-1}_2(\dots(\pi^{-1}_t(m))\dots))$.

  Notice that on a cycle, every best staircase has middle in $m$.
  There cannot be two best staircases with the same middle $m$ and different
  sizes, as the smaller could have been extended to a proper almost staircase
  and thus is not best.
  This concludes the proof of the lemma.
\end{proof}

\begin{definition}
 An element $i$ is the leader of its cycle if there exists the best staircase
 from~$i$.
 \label{def:leader}
\end{definition}

There can be more than one element on~a~path from which there is a best
staircase.
We will define leaders for paths in Appendix \ref{se:inverting-b1}.

Intuitively, our notion of a leader allows us to elect it efficiently because we can
terminate $\BestStaircase(i)$ as soon as we encounter the first $r$
such that $\pi_{r}(\pi_{r-1}(\dots(\pi_1(i))\dots)) \not \in E_r$.
Using this definition, we can simply check if there is the best staircase from
$i$ and if so, return $i$ as the leader of the cycle, see
Algorithm~\ref{alg:process_with_best}.

\begin{algorithm}[t]
\begin{algorithmic}[1]
  \Function{$\Process$}{$i$}
    \State $m \gets \BestStaircase(i)$
    \If{$m \ne \NULL$}
      \State Report that $i$ is the leader
    \EndIf
 \EndFunction
\end{algorithmic}
\caption{Checks if $i$ is the leader of its cycle.}
\label{alg:process_with_best}
\end{algorithm}

\begin{theorem}\label{thm:fich}
  There exists an algorithm for reporting leaders of a permutation $\pi$ on $n$
  elements in $\Oh(n \log n)$ time using $\Oh(\log^2 n)$ additional bits of space.
\end{theorem}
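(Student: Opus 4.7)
The plan is to verify three aspects: correctness, the $\Oh(\log^2 n)$ space bound, and the $\Oh(n \log n)$ time bound. Correctness comes for free: Lemma~\ref{lemma:unique_cycle_leader_fich} guarantees that every cycle of $\pi$ contains exactly one element from which a best staircase exists, and Algorithm~\ref{alg:process_with_best} reports precisely those elements via Algorithm~\ref{alg:left_to_right}, so each cycle contributes exactly one leader.

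For the space bound, I would observe that on every cycle of $\pi$ one has $|E_{r+1}| \le |E_r|/2$ (at most every second element of $E_{r-1}$ can lie in $E_r$, by the definition of local minima), hence at most $\lceil \log n \rceil + 1$ levels are nonempty on any cycle. Algorithm~\ref{alg:BestStaircase} stores only the $\elbow$ array and a constant number of scratch variables ($m$, $m'$, $m''$) per recursion level, and the outer loop variable $i$. Each $\elbow$ entry holds an integer in $[n]$ and therefore uses $\Oh(\log n)$ bits; since there are $\Oh(\log n)$ entries, the total is $\Oh(\log^2 n)$ bits.

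The main obstacle is the time bound, since the running time of a single \textsc{BestStaircase}$(i)$ call can be as large as $\Theta(n)$ in the worst case and we invoke it $n$ times. The key observation is that the total cost of $\BestStaircase(i)$ is proportional to the total number of $\pi$-queries performed across all recursive $\Next$ calls it triggers, and this in turn equals the total length of the $\pi_{r-1}$-segments scanned at each level $r$. I would amortize by specializing the charging argument used in the proof of Theorem~\ref{thm:alg_rep_leaders} to $b = 1$: for a starting element $i$ whose best staircase has size $r$ with middle $m \in E_{r+1} \setminus E_{r+2}$, the traversal performed at level $r+1$ during $\BestStaircase(i)$ stays inside the almost staircase of size $r+1$ from $i$, which is contained in the range from $\pi_{r+1}^{-1}(m)$ to $\pi_{r+1}^{2}(m)$. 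Charging this range to the pair $(r+1, m)$, every element of the cycle is charged $\Oh(1)$ times at each level.

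Summing the charged lengths across all $i$ yields $\Oh(n)$ work per level. Because at most $\lceil \log n \rceil + 1$ levels are nonempty, the total cost is $\Oh(n \log n)$, establishing the running time. Combined with the space analysis and correctness, this gives the theorem.
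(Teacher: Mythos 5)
Your proof is correct in substance and follows essentially the same route as the paper: correctness via Lemma~\ref{lemma:unique_cycle_leader_fich}, space from the $\Oh(\log n)$-entry $\elbow$ array (plus the $\Oh(\log n)$-deep recursion stack of $\Next$, which you leave implicit), and the same charge-to-the-middle amortization, i.e.\ the $b=1$ specialization of the argument in Theorem~\ref{thm:alg_rep_leaders}.

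Two small inaccuracies, neither fatal. First, the containment you state is slightly too tight: when the largest staircase built from $i$ has size $r$ and middle $m$, the scan does not stop by $\pi_{r+1}^{2}(m)$ --- after the new middle $\pi_{r+1}(m)$ is advanced once more at level $r+1$, the lower-level elbows of the right part of the size-$(r+1)$ almost staircase end at $\pi_1(\pi_2(\dots\pi_r(\pi_{r+1}^{2}(m))\dots))$, which lies strictly beyond $\pi_{r+1}^{2}(m)$ and, by the nested-elbow property, before $\pi_{r+1}^{3}(m)$; the correct charged range is therefore $\pi_{r+1}^{-1}(m)\ldots\pi_{r+1}^{3}(m)$ (the paper also splits it into four $\pi_{r+1}$-steps to handle $|E_{r+1}|<4$). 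This only changes the constant, so the $\Oh(n)$-per-level bound and the $\Oh(n\log n)$ total survive. Second, phrasing: for a non-leader $i$ there is no best staircase from $i$, so the charging should be stated in terms of the largest staircase \emph{constructed during} $\BestStaircase(i)$ (regardless of why it could not be extended), exactly as the paper does.
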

\begin{proof}
  The correctness of the algorithm follows immediately from
  Lemma~\ref{lemma:unique_cycle_leader_fich}.
  The cost of~$\BestStaircase$ is proportional to the total number of calls to
  the $\Next$ function, which is asymptotically the number of accesses to the
  input permutation $\pi$.
  Observe that during the execution of $\BestStaircase(i)$ for a particular $i$, every call to $\Next(1)$ that accesses the permutation $\pi$,
  proceeds along the cycle, so the execution of $\Process(i)$ and testing if $i$ is
  the leader involves $k$ steps along the cycle, from $i$ to $\pi^k(i)$ for some
  $k$. Now we need to bound the total number of steps performed for all starting
  elements $i$.

  \FIGURE{h}{0.98}{range-of-middle}{An~almost staircase of size $r+1$ from $i$.
  $m$ is the middle element of a~staircase of size $r$ from $i$.}

  Consider an element $i$.
  Suppose the largest staircase created during execution of $\BestStaircase(i)$ have size $r$ and middle $m$. We do not distinguish the reason why there is no staircase of larger size.
  From the nested structure of elbows, $i$ is between $\pi_{r+1}^{-1}(m)$ and
  $m$ on the cycle and construction of a larger almost staircase terminated between $m$ and
  $\pi_{r+1}^3(m)$. See Figure \ref{fig:range-of-middle}.
  Hence the traversed part of the cycle while running $\Process(i)$ is
  bounded by the part of the cycle between $\pi_{r+1}^{-1}(m)$ and
  $\pi_{r+1}^3(m)$ and we say that $i$ charges the time of $\BestStaircase(i)$ to $m$ at level $r$.
  As for a fixed element $m$ and level $r$ there is at most one $i$ charging to $m$ at level $r$, the total time for processing all elements is bounded by:
  $$
    \sum_{r = 1}^{\lceil \log n \rceil} \sum_{m \in E_r}
    \left(
      \len{\pi_{r+1}^{-1}(m)}{m}
    + \len{m}{\pi_{r+1}(m)}
    + \len{\pi_{r+1}(m)}{\pi_{r+1}^2(m)}
    + \len{\pi_{r+1}^2(m)}{\pi_{r+1}^3(m)}
    \right).
  $$
  where we split interval $\pi_{r+1}^{-1}(m)\cdots \pi_{r+1}^3(m)$ into four terms in order to cover also the case when $|E_{r+1}|<4$.
  As the inner sum is bounded by $4n$, the overall running time is $\Oh(n \log n)$.
  
  The list of elbows consists of $\Oh(\log n)$ words of $\Oh(\log n)$ bits and
  the depth of the recursion is $\Oh(\log n)$, so~the space complexity
  is $\Oh(\log^2 n)$ bits.
\end{proof}

\section{Inverting permutation}\label{se:inverting-b1}

To invert a~cycle of~a~permutation, we use the following procedure of El-Zein et al. \cite{El-ZeinMR16}:
\begin{algorithm}[t]
\begin{algorithmic}[1]
  \Function{$\InvertCycle$}{$i$}
    \State $p \gets i$
    \State $x \gets \pi(i)$
    \While{$x \neq i$}
      \State $n \gets \pi(x)$
      \State $\pi(x) \gets p$ \label{li:update_pi}
      \State $p \gets x$
      \State $x \gets n$
    \EndWhile
    \State $\pi(x) \gets p$
  \EndFunction
\end{algorithmic}
\caption{Invert the~cycle with element $i$.}
\label{alg:invert_cycle}
\end{algorithm}

In each iteration of the while loop in Algorithm~\ref{alg:invert_cycle}, $x$ is the
position to update and before executing line \ref{li:update_pi} of the
algorithm we have $p = \pi^{-1}(x)$ and $n = \pi(x)$.
We obtain $\Oh(1)$-space, $\Oh(n^2)$-time algorithm which computes
$\pi^{-1}$ from $\pi$ by combining
Algorithms~\ref{alg:process}~and~\ref{alg:invert_cycle}:
instead of reporting leaders, one should just invert the~cycle.

The natural idea is to invert each cycle upon reaching its leader. This works
for the algorithm electing as leader the minimum of the cycle, as the leader does not change after inverting the cycle.
Unfortunately, the definition of leader given in previous section is sensitive
to cycle inversion.
This is exactly why simply executing Algorithm~\ref{alg:process_with_best}
instead of Algorithm~\ref{alg:process} is not a~solution, because we need to
ensure that we do not invert a cycle twice, but we cannot mark that a cycle
is already inverted.

More precisely, consider a cycle $C$ with the leader $i$
and the minimum element $m$. When Algorithm~\ref{alg:process_with_best} inverts
$C$, it might happen that the leader $j$ of the inverted cycle is greater than
$i$ and in this case, the already inverted cycle $C$ gets inverted again while
considering $j$. This explains why our algorithm (and the algorithm of
Fich et al.) cannot be applied directly for inverting permutations.

To overcome this difficulty, our algorithm (similar to~\cite{El-ZeinMR16}) will 
mark such difficult cycles by converting them to paths. Later,
before terminating the whole algorithm, each path has to be repaired to a~cycle.
At a high level, this is a~simplification of \cite{El-ZeinMR16}, because instead of having four
different scans running simultaneously, we have only one, corresponding to $\mathscr{F}$ there.
Moreover, in our algorithm, because we store extended ranks, each hard cycle is
fixed only once, exactly when needed. This makes both the algorithm and analysis
simpler.

Observe that using the approach from the previous section, we can compute
$j$ as the end of the largest staircase constructed from $i$ which
we can read from $\elbow[0]$. Indeed, after inverting a~cycle, the middle of
the best staircase remains the same, so the best staircase of the inverted cycle
is  $(i', j_2, j_3, \dots, j_r, m, i_r, \dots, i_3, i_2, i)$, exactly the
reverse of the previous best staircase.

\begin{definition}
  A~cycle $C$ of $\pi$ is~\emph{hard} if~its best staircase
  $(i, i_2, i_3, \dots, i_r, m, j_r, \dots, j_3, j_2, i')$ satisfies $i' > i$.
  Otherwise, $C$ is \emph{easy}.
\end{definition}

At a high level, to compute $\pi^{-1}$ we invert cycles when processing their leaders.
If the cycle is hard we need to avoid inverting it again while processing the leader of the inverted cycle.
For this purpose we replace one value of $\pi$ on the cycle with $\perp$ and the cycle becomes a path.
The path will be transformed to the cycle again while processing the leader of the inverted cycle.
Then the cycle will be never modified again.

On a path, while processing an element $i$, we are unable to visit its preceding elements and so we cannot check if an element is minimum on the path as in Algorithm~\ref{alg:process_with_best}.
In order to be able to recognise this situation, we store some additional information in each $\perp$.

Recall that in our model, we can only replace $\pi(i)$
with some value from $[n]$.
El-Zein et al. \cite{El-ZeinMR16} introduce an approach which, using some
additional space, allows to virtually extend the set of possible
values stored in $\pi$, so that we can design algorithms in which
$\pi(i)\in [n] \cup \{\perp_1,\perp_2,\ldots,\perp_k\}$.
$\perp_i$ is the $i$-th \emph{null} character and it is possible to retrieve the
value of $i$ from $\perp_i$.
In our application, each value from $[n]$ appears at most $c$ times in
$\pi$, but for any $i$ there can be many elements $j$ such that $\pi(j)=\perp_i$.

\begin{lemma}[\cite{El-ZeinMR16}]\label{le:nulls}
 For any $k\in [n]$, it is possible to simulate an extension of the range of stored values of
 $\pi$ from $[n]$ to $[n] \cup \{\perp_1,\ldots,\perp_k\}$ with each value from
 $[n]$ occurring at~most $c$ times in $\pi$ using $\Oh(ck\log n)$ bits with
 $\Oh(c)$-time overhead.
\end{lemma}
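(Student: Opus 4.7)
My plan is to reserve $k$ distinct sentinel values $v_1,\ldots,v_k\in[n]$ (for concreteness $v_i=i$) and to encode the symbol $\perp_i$ at position $j$ by simply writing $v_i$ into the physical cell for $\pi(j)$. To tell a legitimate occurrence of $v_i$ apart from an encoded $\perp_i$, I would maintain, for each $i\in[k]$, an auxiliary list $L_i$ consisting of exactly the positions at which $v_i$ is currently stored with its genuine meaning. Because by hypothesis every value in $[n]$ occurs at most $c$ times in $\pi$ at any moment of the execution, $|L_i|\le c$ will always hold, and the $k$ lists together occupy $\Oh(ck\log n)$ bits, matching the claimed space bound.

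Access to $\pi$ will be mediated by two routines. To read $\pi(j)$ I will fetch the raw cell value $u$, decide in constant time whether $u=v_i$ for some $i\le k$ and, if so, scan $L_i$ to test whether $j\in L_i$; I return $u$ when $u$ is not a sentinel or when $j\in L_i$, and I return $\perp_i$ otherwise. To write $x$ into $\pi(j)$ I first perform this read to recover the current value, remove $j$ from the appropriate $L_i$ if the old value was a legitimate $v_i$, physically write $x$ into the cell, and finally append $j$ to $L_i$ if $x$ equals some sentinel $v_i$ as a genuine element of $[n]$; no list is modified when $x=\perp_i$. Storing each $L_i$ as a length-prefixed array of at most $c$ entries makes search, insertion, and deletion cost $\Oh(c)$ time, which gives the claimed $\Oh(c)$-time overhead per access.

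Correctness reduces to the invariant that, at every moment, $L_i$ enumerates precisely the positions at which $v_i$ is stored with its ordinary meaning; this follows by induction on the sequence of reads and writes, using the update rules above. The only subtlety, and the main obstacle to guard against, is that the premise that each value from $[n]$ occurs at most $c$ times in $\pi$ must hold throughout the execution of the outer algorithm and not just at initialisation---otherwise some $|L_i|$ could exceed $c$ and both the space bound and the constant-list bookkeeping would break. This responsibility is deferred to the caller of the simulation and is easily verified for the uses of $\perp$ in the cycle-inversion algorithm. With the invariant in force, the space and time bounds stated in the lemma follow immediately.
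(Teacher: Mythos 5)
Your proposal is correct and follows essentially the same construction as the paper: encode $\perp_x$ by writing the small value $x\le k$ into the cell, and disambiguate genuine occurrences of values in $[k]$ via $k$ auxiliary lists of at most $c$ positions each, giving $\Oh(ck\log n)$ bits and $\Oh(c)$ overhead per access. Your explicit remark that the bound $c$ must hold throughout the execution (not just initially) is a reasonable extra caveat, but it does not change the argument.
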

\begin{proof}
  We store $\pi$ as an array $A$ with values from $[n]$ of size $n$ and an
  array $B$ of $k$ lists, each of length at most $c$.
  The $k$-th list $B[k]$ stores all values $i$ such that $\pi(i)=x$.
  We update the value of $\pi$ in the following way:

  \begin{itemize}
    \item for the update $\pi(i) \leftarrow x \in [n]$, we set $A[i] \gets x$ and if $x \le k$
    insert $i$ into the list $B[x]$,
    \item for the update $\pi(i) \leftarrow \perp_x$, we set $A[i] \gets x$.
  \end{itemize}

  If before the update held $\pi(i)=x\in[k]$, we first remove $i$ from $B[x]$.
  To retrieve the value of $\pi(i)$, we return $A[i]$ if it is larger than $k$.
  Otherwise, $A[i] = x \le k$ and $\pi(i)$ is either $x$ or $\perp_x$. To distinguish
  between these cases, we check the list $B[x]$ and return
  $x$ if $i \in B[x]$ or $\perp_x$ otherwise.
\end{proof}
\noindent
When the exact value stored in $\perp$ is not relevant, we will use the symbol
$\perp$ denoting $\perp_k$ for any~$k$.

\begin{definition}
 A \emph{half staircase} of size $r$ from $i$ is a sequence
 of elements $(i=i_1,i_2,\ldots,i_{r+1}=m=j_r,j_{r-1},j_{r-2},\ldots,j_1=i')$
 such that $i_k,j_\ell\in E_k$, for $k\in[r]$, $\ell\in[r-1]$ and $i_{k+1}=\pi_k(i_k)$
 for $k\in[r]$ and $j_\ell=\pi_\ell(j_{\ell+1})$ for $\ell\in[r-1]$.
\end{definition}
Intuitively, a~half staircase of size $r+1$ is the intermediate state
between the~staircase of size $r$ and an almost staircase of size $r+1$.
See Figure \ref{fig:life-of-staircase} for reference.

\FIGURE{h}{0.98}{life-of-staircase}{
  A~staircase of size $r$, a~half staircase of size $r+1$ and an~almost
  staircase of size $r+1$.
}

Now we modify procedure $\BestStaircase(i)$ to return a~tuple $(m, i', t, h)$ where
$m$ is the middle of~the constructed best staircase from $i$, $i'$ is~its end,
$t$ is its size and $h$ is a~Boolean
value denoting if an abort occurred after constructing a half staircase of
size $t+1$ from $i$.
\begin{algorithm}[t]
\begin{algorithmic}[1]
  \Function{$\BestStaircase$}{$i$}
    \State $\elbow[0] \gets \elbow[1] \gets i$
    \For{$r=1..$}
      \State $pi' \gets \elbow[0]$
      \State $m \gets \elbow[r]$
      \If{\Next($r$) aborted} \label{li:returnBS2-1}
        \State \Return $(m, pi', r-1, \false)$
      \EndIf
      \State $m' \gets \elbow[r] \gets \elbow[r-1]$ \label{li:half-constructed}
      \If{\Next($r$) aborted} \label{li:returnBS2-2}
        \State \Return $(m, pi', r-1, \true)$
      \EndIf
      \State $m'' \gets \elbow[r-1]$ \label{li:partial-constructed}
      \If{$m = m'$} \label{li:returnBS2-3}
        \State \Return $(m, pi', r-1, \true)$ 
      \EndIf
      \If{$m' < m$\textbf{ and }$m' < m''$}
        \State $\elbow[r+1] \gets \elbow[r]$
      \Else
        \State \Return \NULL
      \EndIf
    \EndFor
  \EndFunction
\end{algorithmic}
\caption{Modified function that constructs the best staircase from $i$ (if any).}
\label{alg:BestStaircase2}
\end{algorithm}

Algorithm~\ref{alg:BestStaircase2} implements the lifecycle of staircase extension.
We try to extend the staircase by constructing a half staircase of bigger size, then almost staircase and then checking if the almost staircase is a staircase.
See Figure \ref{fig:life-of-staircase}.
More precisely, the $r$-th iteration of the main loop starts with the already constructed staircase
of size $r-1$. In line \ref{li:half-constructed}, a half staircase of size $r$
is constructed and then, in line \ref{li:partial-constructed} the algorithm
extends it to an almost staircase of size $r$. If an abort occurred while constructing an almost staircase of size $r$ (in line \ref{li:returnBS2-1} or \ref{li:returnBS2-2}) or if the constructed almost staircase is
not proper (line \ref{li:returnBS2-3}), then the~staircase of size $r-1$ from the beginning of
iteration is the~best staircase from $i$.
If a proper~almost staircase of size $r$ is~successfully created, the algorithm
checks if it is a~staircase of size $r$ and either proceeds to next iteration or
returns $\NULL$.

\begin{definition}
  The \emph{rank} of~element $i$ on~a~path of $\pi$ is~the size of the best
  staircase from $i$. If there is no~best staircase from $i$ then the rank of
  $i$ is~undefined.
\end{definition}

\begin{definition}
  The \emph{extended rank} of~element $i$ on~a~path of $\pi$ is~pair $(t, h)$,
  where $t$ is the rank of $i$ and $h$ denotes if it
  is possible to construct the half staircase of size $t + 1$ from $i$. If $t$
  is undefined then the extended rank of $i$ is also undefined.
\end{definition}

Extended ranks are compared lexicographically. The concept
of rank is similar to \emph{limited depth} in \cite{El-ZeinMR16}. Our
contribution is the idea of half staircases and extended ranks. To see why it
makes further considerations easier, we show the following fact:

\begin{lemma}
  For any path $P$ of $\pi$ there is a~unique element of $P$ with the
  maximum extended rank.
  \label{lemma:unique-leader}
\end{lemma}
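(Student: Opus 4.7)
The plan is to establish existence and uniqueness separately. For existence, note that the terminal element $x$ of $P$ (with $\pi(x)=\perp$) has a trivial size-$0$ best staircase, and both $\Next(1)$ calls in $\BestStaircase(x)$ abort immediately, so $x$ has extended rank $(0,\false)$; the set of elements with a defined extended rank is nonempty and a maximum is attained.

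For uniqueness, I would assume two distinct elements $i_1,i_2\in P$ attain the same maximum extended rank $(t,h)$ and show that the middles of their size-$t$ staircases coincide. The middle of the size-$t$ staircase from $i_j$ is $m^{t}_{j}=\pi_t(\cdots(\pi_1(i_j))\cdots)\in E_{t+1}$. If $h=\false$, the first $\Next(t+1)$ call from $i_j$ aborts, so $m^{t}_{j}$ is the unique last element of $E_{t+1}$ on $P$, giving $m^{t}_{1}=m^{t}_{2}$. If $h=\true$, each $i_j$ successfully advances one more step to $m^{t+1}_{j}=\pi_{t+1}(m^{t}_{j})\in E_{t+1}$ and the second $\Next(t+1)$ call aborts, so $m^{t+1}_{j}$ is the unique last element of $E_{t+1}$ on $P$; since $h=\true$ forces $|E_{t+1}|\ge 2$ (else even the first $\Next(t+1)$ would already have aborted), the inverse $\pi_{t+1}^{-1}$ is well-defined at the common last element and yields $m^{t}_{1}=m^{t}_{2}$.

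With the common middle $m=m^{t}_{1}=m^{t}_{2}$ in hand, I would recover $i_1=i_2$ by walking backward. Writing $i^{(j)}_{k}$ for the $k$-th left-side element of the size-$t$ staircase from $i_j$ (so $i^{(j)}_{1}=i_j$ and $i^{(j)}_{t+1}=m$), the recurrence $i^{(j)}_{k+1}=\pi_k(i^{(j)}_{k})$ together with the injectivity of each partial function $\pi_k$ on $E_k$ forces $i^{(j)}_{k}=\pi_{k}^{-1}(i^{(j)}_{k+1})$ to be uniquely determined by $i^{(j)}_{k+1}$, and the existence of the staircase from $i_j$ guarantees the preimage exists. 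An induction from $k=t$ down to $k=1$ starting at $i^{(1)}_{t+1}=m=i^{(2)}_{t+1}$ then yields $i_1=i^{(1)}_{1}=i^{(2)}_{1}=i_2$, contradicting $i_1\neq i_2$.

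The main obstacle is cleanly handling the $h=\true$ case: one must verify $|E_{t+1}|\ge 2$ in order to legitimately apply $\pi_{t+1}^{-1}$. This is, however, immediate because the very fact that the half staircase of size $t+1$ was constructed from $m^{t}_{j}$ exhibits $m^{t+1}_{j}=\pi_{t+1}(m^{t}_{j})\neq m^{t}_{j}$ as a second element of $E_{t+1}$ on $P$.
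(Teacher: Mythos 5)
Your existence step and the final backward walk (recovering the start from the middle via injectivity of the $\pi_k$) are fine and match the paper's opening observation. The genuine gap is in the step where you pin down the middle: you claim that when $h=\false$ the middle $m^t_j$ must be ``the unique last element of $E_{t+1}$ on $P$'' (and analogously $m^{t+1}_j$ when $h=\true$). This is not implied by the abort of $\Next(t+1)$, nor by the combinatorial definition of $h$: the half (or almost) staircase of size $t+1$ can fail to exist even though $\pi_{t+1}$ of the middle is defined, because it is the \emph{descending} right part $\pi_t,\pi_{t-1},\dots,\pi_1$ of the extension that runs off the end of the path. Concretely, take the path $5\to 2\to 4\to 1\to\perp$ (values listed in traversal order). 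Here $E_2=(2,1)$ in path order, the best staircase from $5$ is $(5,2,4)$ of size $1$, and the half staircase of size $2$ from $5$ fails only because $\pi_1(1)=\perp$; so $5$ has extended rank $(1,\false)$, yet its middle $2$ is not the last element of $E_2$ (that is $1$). For the same reason, two distinct starts could in principle both have their extension attempts abort, so ``both middles equal the last element of $E_{t+1}$'' is not a valid route to $m^t_1=m^t_2$; establishing that the middles coincide is exactly the nontrivial content of the lemma, and your proof assumes it rather than proves it.

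What is missing is a positional argument of the kind the paper uses: assuming $i$ precedes $j$ on the path and both have maximum extended rank $(t,h)$ with middles $m_i\neq m_j$, the nested structure of staircases gives that $\pi_{t+1}(m_i)$ is not later than $m_j$, and hence the end of the size-$(t+1)$ half (resp.\ almost) staircase from $i$ is not later on the path than the end of $j$'s size-$t$ staircase, which is known to exist; therefore $i$'s extension never meets $\perp$ and $i$'s extended rank exceeds $(t,h)$, a contradiction. In other words, the existence of the \emph{second} witness $j$ is what guarantees that the first witness's extension stays on the path; your proof never uses the presence of the second element in this way, so the case of two distinct middles is not ruled out.
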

\begin{proof}
  First, observe that the extended rank $(t,h)$ and the middle $m\in E_{t+1}$
  uniquely determines the start of the (half)
  staircase, which is $\pi_1^{-1}(\pi_2^{-1}(\dots(\pi_t^{-1}(m))\dots))$ if
  $h = \False$ and $\pi_1^{-1}(\pi_2^{-1}(\dots(\pi_{t+1}^{-1}(m))\dots))$ if
  $h = \True$. See Figure \ref{fig:life-of-staircase}.

  Suppose that there are two distinct elements $i$ and $j$ with the same
  maximum extended rank $(t,h)$ and let $x$ be an element of $\pi$ such that $\pi(x) =\perp$.
  Without loss of generality, assume that $\len{i}{x} > \len{j}{x}$, that is $i$ is before $j$ on the path.
  Let $m_i$ and $m_j$ be the middle of the best staircase of size $t$ starting from $i$ and $j$, respectively.
  By definition, both $m_i$ and $m_j$ belong to $E_{t+1}$.
  From the nested structure of staircases, $\pi_{t+1}(m_i)$ is not later than $m_j$ on the path.
  Consider first the case when $h=\False$.
  Then it is possible to construct a half staircase of size $t+1$ from $i$, because
  $\pi_1(\pi_2(\ldots(\pi_{t+1}(m_i))\ldots))$ is not later than $\pi_1(\pi_2(\ldots(\pi_t(m_j))\ldots))$ on the path and thus we can increase the extended rank of $i$.
  For $h=\True$, in a similar way we can construct an almost staircase of size $t+1$ from $i$ which contradicts that there is a best staircase of size $t$ from $i$.
  See Figure~\ref{fig:proof-staircases}.
\end{proof}

\FIGURE{h}{0.98}{proof-staircases}{
    Two cases from the proof: on the top of the picture $i$ and $j$ have the extended
    rank $(t,0)$ and on the bottom they have extended rank $(t,1)$.
    For the first case, we show that $i$ has the extended rank $(t,1)$, because
    the~half staircase of size $t+1$ from $i$ has its end not later
    than $j'$.
    Similarly, for the second case, we show that $i$ has extended rank
    $(t+1,0)$.
    In the figure, it is assumed that $j$ is later on the path than $m_i$
    (which actually is always true), but the proof does not use this assumption.
  }

In fact, by a~simple induction and applying the same reasoning as in the proof
of Lemma~\ref{lemma:unique-leader} in every inductive step, we can prove even
stronger property:
\begin{observation}
  Let $(e_1, e_2, \dots, e_k)$ be all the elements on a~path $P$ of $\pi$
  with defined extended rank, listed in the natural order. The sequence of
  their corresponding extended ranks is strictly decreasing.
  \label{observation:ranks-increasing-along-path}
\end{observation}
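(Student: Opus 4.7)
The plan is to prove the stronger claim that for any indices $a < b$ in $\{1,\ldots,k\}$, the extended rank of $e_a$ is strictly greater than the extended rank of $e_b$. By transitivity this reduces to handling an adjacent pair, which I treat uniformly by adapting the argument from the proof of Lemma~\ref{lemma:unique-leader}; the induction alluded to by the paper is then a walk along the path applying this argument at each step.

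Fix an adjacent pair $(e_a, e_b)$ with $e_a$ strictly earlier on the path, and let $(t_a, h_a)$, $(t_b, h_b)$ be their extended ranks and $m_a \in E_{t_a+1}$, $m_b \in E_{t_b+1}$ the middles of their best staircases. Assume for contradiction that $(t_a, h_a) \le (t_b, h_b)$ lexicographically. The first step is to establish that $m_a$ strictly precedes $m_b$ on the path: uniqueness of the starting element given the middle and the extended rank (exhibited in the proof of Lemma~\ref{lemma:unique-leader} and Figure~\ref{fig:life-of-staircase}) together with $e_a \neq e_b$ forces $m_a \neq m_b$, and since $e_a$ precedes $e_b$ the nested staircase structure of Figure~\ref{fig:recursive-elbows} forces $m_a$ to appear first; in particular $\pi_{t_b+1}(m_a)$ exists and lies at or before $m_b$ on the path.

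The second step mimics Lemma~\ref{lemma:unique-leader} verbatim. When $(t_a, h_a) = (t_b, h_b) = (t, h)$, the same case split on $h$ lets me extend $e_a$'s best staircase either to a half staircase of size $t+1$ (if $h = \false$) or to a proper almost staircase of size $t+1$ (if $h = \true$), since the relevant endpoints lie before $m_b$ and hence on the path; both conclusions strictly exceed $(t_a, h_a)$, a contradiction. When $(t_a, h_a) < (t_b, h_b)$ strictly, I iterate the same extension, climbing from $m_a$ along $\pi_{t_a+1}, \pi_{t_a+2}, \ldots, \pi_{t_b+1}$ and then descending along $\pi_{t_b}, \pi_{t_b-1}, \ldots, \pi_1$. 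Each intermediate position stays before $m_b$ on the path by the nested structure, so no application of $\pi_r$ returns $\perp$, and a proper almost staircase from $e_a$ of size at least $t_b + 1$ materialises, again contradicting $(t_a, h_a) < (t_b, h_b)$.

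The main obstacle I anticipate is pinning down the strict ordering ``$m_a$ before $m_b$'' and, in the case $t_a < t_b$, justifying that the iterated climb from $e_a$ fits inside the path segment that ends at or before $m_b$. Both rest on careful bookkeeping with the nested staircase structure already exploited in the proof of Lemma~\ref{lemma:unique-leader}; once they are in hand, the inductive step is a repackaging of the same case analysis.
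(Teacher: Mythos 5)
Your equal-rank case is exactly the paper's argument (the paper itself only asserts the observation follows ``by a simple induction applying the same reasoning as in the proof of Lemma~\ref{lemma:unique-leader}''), but the case $(t_a,h_a)<(t_b,h_b)$ --- the only case that lemma does not already cover --- is where your proposal has a genuine gap. You climb from $m_a$ along $\pi_{t_a+1},\pi_{t_a+2},\ldots,\pi_{t_b+1}$ and then descend, claiming a proper almost staircase of size at least $t_b+1$ from $e_a$ materialises. But $\pi_r$ is only defined on $E_r$, and nothing guarantees that $\pi_{t_a+1}(m_a)$ lies in $E_{t_a+2}$, that the next climbed element lies in $E_{t_a+3}$, and so on: by definition an almost staircase of size $t_b+1$ from $e_a$ needs $i_k\in E_k$ for all $k\in[t_b+1]$, and the existence of such a climb is essentially equivalent to $e_a$ having large rank --- precisely what you are trying to refute, so the construction is circular and the object you invoke need not exist. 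Staying ``before $m_b$ on the path'' only rules out hitting $\perp$; it does not make those applications of $\pi_r$ meaningful.

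The repair is that you never need more than one extra level. Since $t_a<t_b$, truncating the climb of $e_b$'s best staircase gives a staircase of size $t_a+1$ from $e_b$ (its descent exists by the nested structure of elbows), and level-by-level order preservation of climbs and descents then shows that $\pi_{t_a+1}(m_a)$, its successor under $\pi_{t_a+1}$, and the whole descent via $\pi_{t_a},\ldots,\pi_1$ all exist on the path; this is a proper almost staircase of size $t_a+1$ from $e_a$, already contradicting that $e_a$ has rank $t_a$. For $t_a=t_b=t$ with $h_a=\False$, $h_b=\True$, the same domination by $e_b$'s half staircase of size $t+1$ yields a half staircase of size $t+1$ from $e_a$, contradicting $h_a=\False$; note that in this subcase the steps you describe would produce a half staircase anyway, not the almost staircase of size $t_b+1$ you claim. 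Finally, your appeal to ``uniqueness of the start given the middle and the extended rank'' to force $m_a\neq m_b$ only applies when the two extended ranks coincide; when $t_a<t_b$ you should instead settle for $m_a$ being at or before $m_b$, which is all the corrected argument requires.
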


We note that from Lemma~\ref{lemma:unique-leader} it follows that on
every path there are at most two elements of the maximum rank and this fact was also used
in \cite{El-ZeinMR16}.

\begin{definition}
  The~\emph{leader} of path $P$ of $\pi$ is the element of $P$ with the largest
  extended rank.
\end{definition}

Clearly, by Lemma~\ref{lemma:unique-leader}, there is a unique leader on every path.
For a~cycle $C$ and any $x \in C$ and $y = \pi(x)$, by setting $\pi(x) \gets \perp$
we obtain a~path $P$ with $y$ as its first element and $x$ as the last
element. We call this operation \emph{cutting before $y$}.

\begin{proposition}
  If a~cycle $C$ of $\pi$ with best staircase from $i$ is cut before $i$,
  then the~best staircase from $i$ in $P$ exists.
  \label{prop:cut-is-safe}
\end{proposition}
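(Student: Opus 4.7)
The plan is to analyse the behaviour of $\BestStaircase(i)$ (Algorithm~\ref{alg:BestStaircase2}) when run on $P$ and show that it returns a valid best staircase from $i$. The key observation is that $P$ differs from $C$ in exactly one stored value: at the cut point $x := \pi^{-1}(i)$ we have $\pi(x) = i$ in $C$ versus $\pi(x) = \perp$ in $P$, and every other $\pi$-value is identical. Since the algorithm touches $\pi$ only through the base case of $\Next(1)$, an induction on the number of executed instructions shows that the runs on $C$ and on $P$ proceed in lockstep -- same $\elbow$ contents, same outcomes of the comparisons $m=m'$, $m'<m$, $m'<m''$, same control flow -- until the first call of $\Next(1)$ with $\elbow[1]=x$, at which moment the $P$-run aborts while the $C$-run would continue through $\pi(x)=i$.

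Since $i$ is the leader of $C$, Lemma~\ref{lemma:unique_cycle_leader_fich} guarantees that the $C$-run returns via the $m=m'$ branch at some size $t$ with a best staircase, and in particular never returns $\NULL$. Consequently the $P$-run either (a) faithfully mirrors the entire $C$-run and returns the same best staircase, or (b) aborts at some iteration $r \le t+1$ during a $\Next(r)$ call and returns the staircase $S$ of size $r-1$ that had just been completed in a previous iteration. In case~(a), the returned staircase is also a valid best staircase in $P$ because its construction only consulted $\pi$-values unaffected by the cut. In case~(b), I would verify that $S$ is a best staircase in $P$ in two parts. First, $S$ was built while the two executions agreed, so every local-minimum check that witnessed $i_k,j_k\in E_k(C)$ also witnesses $i_k,j_k\in E_k(P)$, and every $\pi_k$-step taken by $S$ evaluates identically on both. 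Second, the unique candidate almost staircase of size $r$ from $i$ in $P$ is the continuation of the very $\pi_k$-chain the algorithm was pursuing when it aborted; constructing it in $P$ therefore requires reading $\pi(x)=\perp$ at some step, so no almost staircase of size $r$ from $i$ exists in $P$. Since on a path every almost staircase is automatically proper, no proper one exists either, and $S$ meets Definition~\ref{def:BestStaircase}.

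The main technical hurdle is justifying the lockstep equivalence used in the first paragraph, which reduces to showing that $E_k(P)$ and $E_k(C)$ coincide on every element actually inspected by the algorithm before any abort. I would handle this by induction on $k$: the two sets can differ only at the first and last elements of $E_{k-1}$ on $P$ (where a $\pi_{k-1}$-neighbour becomes undefined due to the cut), and the algorithm's strictly-forward $\pi$-walk starting at $i$ can only reach such boundary-affected elements after passing through $x$, by which time the $P$-run has already aborted and the disagreement is never observed.
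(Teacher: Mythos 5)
Your main line of argument is the same as the paper's: the run of $\BestStaircase(i)$ on $P$ mirrors the run on $C$ until the cut is hit, the $C$-run never returns $\NULL$ because the best staircase from $i$ exists on $C$, so the $P$-run can only stop by aborting, and an abort while building the almost staircase of size $r$ means no such almost staircase exists on $P$; since every almost staircase on a path is proper, the staircase of size $r-1$ already built satisfies Definition~\ref{def:BestStaircase} on $P$. (The paper states this in two sentences: the sequence of staircases built on $P$ is a prefix of the sequence built on $C$, so the construction on $P$ can fail only by aborting.) Your case~(a) is in fact vacuous -- confirming $m'=m$ on the cycle forces the underlying pointer to wrap past $x=\pi^{-1}(i)$ -- but including it does no harm.

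The flaw is in the paragraph you yourself single out as the main hurdle. The claim that boundary-affected elements ``can only be reached after passing through $x$'' is false on the start side of the path: $i$ itself is the first element reached, and it may lie in $E_2(P)\setminus E_2(C)$ because the comparison with $\pi^{-1}(i)$ is waived on $P$; this can propagate, e.g.\ the first element of $E_2$ after $i$ may then differ in its $E_3$-membership between $P$ and $C$ (its level-$2$ predecessor is $i$ on $P$ but the wrapped-around element on $C$), which is also not captured by your parenthetical ``a $\pi_{k-1}$-neighbour becomes undefined''. So the lockstep equivalence cannot be justified the way you state it. The repair is twofold. First, lockstep execution needs no $E_k$ analysis at all: the algorithm touches the input only through $\pi$, and the two runs read identical values until $\pi(x)$ is requested, so they coincide up to the abort. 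Second, for the semantic claims (the built staircase is a staircase of $P$, and no almost staircase of size $r$ from $i$ exists in $P$) the start-side discrepancies are harmless for a different reason: the left part of the staircase from $i$ clings to the start of the path, so by induction the $E_k(P)$-predecessor of each $i_{k+1}$ is $i_k$ itself and no element of the symmetric difference $E_k(P)\,\triangle\,E_k(C)$ can lie strictly between consecutive staircase elements; hence the memberships that could differ are never consulted by the staircase or almost-staircase conditions, while end-side discrepancies are exactly what triggers the abort. With that substitution your proof goes through and coincides with the paper's.
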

\begin{proof}
  Let $S_1$ be the sequence of staircases created during construction of the
  best staircase from $i$ in $C$
  and let $S_2$ be the sequence of staircases created during construction of the
  best staircase from $i$ in $P$.
  $S_2$ is a~prefix of $S_1$, so the construction of best staircase from $i$ in
  $P$ never fails due to the fact that proper almost staircase is not a~staircase,
  but only because of an abort during the construction of almost staircase of
  larger size.
\end{proof}

Notice that if~a~cycle with leader $i$ is~cut before $i$
then, by Proposition~\ref{prop:cut-is-safe}, the rank of $i$ is still defined,
although it can be less than the size of the best staircase from $i$ before
cutting the cycle. From Observation
\ref{observation:ranks-increasing-along-path} we obtain:

\begin{corollary}
  For a~cycle $C$ and the best staircase from $i$ ending in $i'$, after
  inverting the cycle and cutting it before $i'$, we obtain a path $P$
  starting with $i'$ being its leader.
\end{corollary}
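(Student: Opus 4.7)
The plan is to chain together three already-established facts: the reversal of best staircases under cycle inversion (stated in the discussion just before the definition of a hard cycle), Proposition~\ref{prop:cut-is-safe}, and Observation~\ref{observation:ranks-increasing-along-path}. No new combinatorial insight should be needed; the corollary is essentially a bookkeeping statement.

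First I would invoke the observation that after inverting $C$ we obtain a cycle $C'$ on the same element set whose best staircase is exactly the reverse of the original, i.e.\ $(i', j_2, \ldots, j_r, m, i_r, \ldots, i_2, i)$. In particular, the best staircase from $i'$ in $C'$ exists and its middle is still $m$. Next, by the definition of cutting before $i'$, setting $\pi(x) \gets \perp$ for the predecessor $x$ of $i'$ in $C'$ produces a path $P$ whose first element is $i'$ (and whose last element is $x$). I would then apply Proposition~\ref{prop:cut-is-safe} to the cycle $C'$ and the element $i'$ — the hypotheses match exactly, since $C'$ admits a best staircase from $i'$ and we cut before that very element — to conclude that the best staircase from $i'$ in $P$ exists. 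Consequently, the rank, and hence the extended rank, of $i'$ in $P$ is defined.

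Finally, by Observation~\ref{observation:ranks-increasing-along-path}, the elements of $P$ having defined extended rank carry strictly decreasing extended ranks in the natural order along $P$. Since $i'$ is the very first element of $P$ and already has a defined extended rank, it must be the element of maximum extended rank on $P$, and thus the leader of $P$ by definition. The only point requiring any care is ensuring that Proposition~\ref{prop:cut-is-safe} is applied to the inverted cycle (not to $C$) and to the start $i'$ of its best staircase; once that alignment is made, I do not expect any substantive obstacle.
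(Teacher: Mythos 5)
Your proposal is correct and follows the paper's own route essentially verbatim: the reversal of the best staircase under inversion, then Proposition~\ref{prop:cut-is-safe} applied to the inverted cycle at $i'$, then Observation~\ref{observation:ranks-increasing-along-path} to conclude that $i'$, being the first element of $P$ with a defined extended rank, has the maximum extended rank and is therefore the leader. No discrepancies to report.
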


We now describe the new procedure $\Process$ for inverting
permutations. During the execution of the algorithm, each cycle of the
permutation can be in one of three following states:
\begin{itemize}
  \item the leader of the cycle has not been processed yet,
  \item the leader of the cycle has been processed, the cycle is inverted, but
  the leader of the inverted cycle has not been processed,
  \item the cycle is inverted and the leader of the inverted cycle has been already processed.
\end{itemize}
The second case applies only to hard cycles.
In this case, after inverting the cycle we cut it before $i'$ where $i'$ is the
leader of the inverted cycle.

\FIGURE{h}{0.98}{cycle-cut}{
  Lifecycle of a hard cycle: before execution of $\Process(9)$, after
  $\Process(9)$ but before execution of $\Process(12)$ and the final cycle
  after execution of $\Process(12)$, respectively. Leaders of cycles/paths are
  denoted with squares.
}

\begin{algorithm}[t]
\begin{algorithmic}[1]
  \Function{$\Process$}{$i$}
    \State $(m, i', r, h) \gets \BestStaircase(i)$
    \If{$m \ne \NULL$}
      \State $a \gets \PathEnd(i)$
      \If{$a = \NULL$ \textbf{ and } $\Min(i,i)= m$}
        \State $b \gets \pi(i')$
        \State $\InvertCycle(i)$
        \If{$i' > i$}
          \State $\pi(b) \gets \perp_?$
          \Comment{The value in the null is yet to be determined.}
          \State $(\_, \_, r', h') \gets \BestStaircase(i')$
          \State $\pi(b) \gets \perp_{2r' + h'}$
          \Comment{Actual extended rank on the path is computed.}
        \EndIf
      \EndIf
      \If{$a \in [n] \textbf{ and } \pi(a) = \perp_{2r + h}$}
        \State $\pi(a) \gets i$
      \EndIf
    \EndIf
  \EndFunction
\end{algorithmic}
\caption{New version of the function to process an element in order to
invert a permutation.}
\label{alg:process_with_best_inverse}
\end{algorithm}

The final approach for processing $i$ is described in
Algorithm~\ref{alg:process_with_best_inverse} where
$\PathEnd$ is a~procedure which naively traverses the cycle from $i$ and
either returns $\NULL$ if the cycle was fully traversed or returns an element $a$,
such that $\pi(a) = \perp$. If a Boolean value appears in any arithmetic
operation, $\false$ is evaluated to $0$ and $\true$ to $1$.

$\Process(i)$ first checks if a~best staircase can be constructed from $i$.
If not, the execution for $i$ ends.
Let $i'$ be the end of the constructed best staircase.
If the algorithm is processing a cycle, we invert it.
If the cycle is hard, we cut the inverted cycle before $i'$,
compute the extended rank of $i'$ on the path
and then store it in the value of null for the predecessor of
$i'$ in the inverted cycle, which is now the end of the path.
Later, for the path, there might be multiple elements from which we can create
best staircases, but $i'$ is the only element with the largest extended rank.
For every element with the best staircase, we compare its extended rank with the
one stored in null.
If they are equal, this means that we found the leader of the path and we fix
the path into the~cycle by setting $\pi(a)=i$, where $a$ is end of the path.
Then, the currently processed element $i$ is the leader of the created cycle and no
element in the future will be the leader of this cycle.
Hence, till the end of the execution of the whole algorithm, the cycle will remain 
unchanged.

\begin{theorem}
  There exists~an~algorithm for inverting permutation on
  $n$ elements in $\Oh(n \log n)$ time using $\Oh(\log^2 n)$ additional bits of
  space.
\end{theorem}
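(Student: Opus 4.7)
The plan is to analyse Algorithm~\ref{alg:left_to_right} instantiated with $\Process$ from Algorithm~\ref{alg:process_with_best_inverse} and the modified $\BestStaircase$ of Algorithm~\ref{alg:BestStaircase2}. I would first set up the structural invariant: at every point during the scan, $\pi$ consists of (i) original cycles whose leader has not yet been reached, (ii) fully inverted cycles (easy ones after inversion, or hard ones after their path has been fixed), and (iii) paths formed from inverted hard cycles cut just before their new leader, with the trailing null encoding, via Lemma~\ref{le:nulls}, the extended rank of that leader.

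For correctness I would argue case by case over the three branches of $\Process(i)$, using Lemma~\ref{lemma:unique_cycle_leader_fich} to show that the cycle-inversion branch fires exactly once per cycle (at its unique leader in the sense of Definition~\ref{def:leader}), and Lemma~\ref{lemma:unique-leader} together with Observation~\ref{observation:ranks-increasing-along-path} to show that the fixing branch fires exactly once per path (at its unique leader, which is the only element whose extended rank matches the encoded value). Proposition~\ref{prop:cut-is-safe} and its corollary guarantee that cutting a hard cycle before $i'$ produces a path on which $i'$ is the leader, so the stored rank is well defined. A small but important verification is that once the fixing branch restores a cycle, the cycle-leader of that cycle is the element just processed, which implies that no later element can re-trigger inversion.

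For the resource bounds, the $\elbow$ array uses $\Oh(\log^2 n)$ bits; because extended ranks $(r,h)$ take only $\Oh(\log n)$ distinct values and each $[n]$-value is stored at most once in the effective $\pi$, Lemma~\ref{le:nulls} contributes another $\Oh(\log^2 n)$ bits of null-simulation overhead. Time is dominated by the at most two $\BestStaircase$ invocations per element, whose total cost is $\Oh(n\log n)$ by the argument already carried out in Theorem~\ref{thm:fich}; $\InvertCycle$ and $\Min(i,i)$ are paid only once per cycle and hence contribute $\Oh(n)$ overall.

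The delicate point, and the one I expect to be the main obstacle, is handling $\PathEnd(i)$: naively it traverses an entire component and would blow up to $\Oh(n^2)$ if run independently for every $i$. The plan is to interleave the execution of $\PathEnd(i)$ with that of $\BestStaircase(i)$, advancing $\PathEnd$ by only $\Oh(1)$ pointer steps per step of $\BestStaircase$, exactly as is later done with $\FindIntersection$ in Section~\ref{se:inverting-b}. This charges the classification cost into the $\Oh(n\log n)$ budget of $\BestStaircase$ and yields the claimed complexity.
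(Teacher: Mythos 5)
Your overall route is the paper's: same algorithm, correctness via Lemma~\ref{lemma:unique_cycle_leader_fich}, Lemma~\ref{lemma:unique-leader}, Observation~\ref{observation:ranks-increasing-along-path} and Proposition~\ref{prop:cut-is-safe}, and space accounting via the $\Oh(\log n)$-deep elbow recursion plus Lemma~\ref{le:nulls} with $c=1$ and $k=\Oh(\log n)$ null types. However, your time analysis misplaces the difficulty. $\PathEnd(i)$ is executed only inside the branch $m\neq\NULL$, i.e.\ only when $\BestStaircase(i)$ actually returns a best staircase. On a cycle this happens for exactly one element (Lemma~\ref{lemma:unique_cycle_leader_fich}), so these calls cost $\Oh(n)$ in total; on a path, Algorithm~\ref{alg:BestStaircase2} can only succeed by aborting at $\perp$ (every almost staircase on a path is proper), so a successful $\BestStaircase(i)$ has already walked from $i$ to the end of the path and its own cost dominates the subsequent $\PathEnd(i)$; moreover, by Observation~\ref{observation:ranks-increasing-along-path} only $\Oh(\log n)$ elements per path can succeed at all. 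Hence the claimed $\Oh(n^2)$ blow-up does not arise and no interleaving is needed. Note also that the interleaving as you describe it is not self-justifying: if $\BestStaircase(i)$ could succeed after few steps, the lock-step $\PathEnd$ would not yet have reached the end of the component and you would still have to finish it, so its correctness rests precisely on the observation above -- the one missing from your write-up (interleaving is used in the paper only later, for $\FindIntersection$ on sigmas in Section~\ref{se:inverting-b}, where the situation is genuinely different).

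The step that does need an explicit argument -- and which you gloss over by invoking ``the argument already carried out in Theorem~\ref{thm:fich}'' -- is that the permutation changes during the scan, so the charging argument of Theorem~\ref{thm:fich} cannot be applied verbatim to a single fixed permutation. Each element is processed either while its component is still an original cycle, or after inversion (on the path or on the fixed inverted cycle), and one must observe that processing an element on the path never costs more than processing the same element on the fixed inverted cycle (the path computation follows the same steps and aborts at $\perp$). The $\Oh(n\log n)$ bound of Theorem~\ref{thm:fich} is then applied twice: once to the original cycles and once to the inverted ones. With that observation added and the $\PathEnd$ discussion corrected or removed, your proof coincides with the paper's.
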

\begin{proof}
  The algorithm calls $\Process(i)$ from Algorithm~\ref{alg:process_with_best_inverse} for all $i=1,2,...,n$.
  Until reaching the leader, it operates on the original cycle and the total time is $\Oh(n \log n)$ as in 
  Theorem~\ref{thm:fich}.
  After inverting the cycle, the time for processing an element on the path does not exceed the time for processing this element on the inverted cycle, so the total time spent by the algorithm after inverting the cycle is upper bounded by processing the whole inverted cycle, which is also $\Oh(n\log n)$
  by the same argument.
  
  As we use $k=2 \lceil \log n \rceil$ types of nulls to~store extended ranks of
  leaders, we bound the additional number of bits using Lemma~\ref{le:nulls} for $c = 1$ and $k=2 \lceil \log n \rceil$.
  Recursion in $\Next$ has depth $O(\log n)$ and each level of~recursion requires only $O(\log n)$
  space.
  Thus, in total we use $O(\log^2 n)$ additional bits of space.
\end{proof}

\section{Tortoise and hare algorithm}\label{se:tortoise-and-hare}
\begin{lemma}
  Given an element $i$ on a sigma, it is possible to find the intersection of
  the sigma or report that $i$ is on the loop in $\Oh(\ell+p)$ time, where
  $\ell$ and $p$ are lengths respectively of loop and path of the sigma.
  \label{lemma:intersection-of-sigma}
\end{lemma}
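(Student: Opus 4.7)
The plan is to apply Floyd's cycle-finding (tortoise-and-hare) algorithm to the sigma, suitably adapted so that the case where $i$ already lies on the loop is detected cleanly. First I would initialize a tortoise $t$ and a hare $h$ both at $i$, then iterate, advancing $t$ by one application of $\pi$ and $h$ by two, until the two pointers coincide after at least one iteration. The standard analysis of Floyd's algorithm shows that such a meeting occurs within $\Oh(p+\ell)$ tortoise steps at some point $z$ that lies on the loop of the sigma.

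Next I would distinguish the two cases via the following observation: $z = i$ if and only if $i$ lies on the loop. Indeed, if $i$ is on the loop then after $k$ tortoise steps we have $t=\pi^k(i)$ and $h=\pi^{2k}(i)$, and these coincide for $k\ge 1$ precisely when $k$ is a positive multiple of $\ell$, so $z=\pi^\ell(i)=i$. Conversely, if $i$ lies on the tail then the meeting point $z$ is on the loop while $i$ is not, so $z\ne i$. Hence a single equality test after phase one, in $\Oh(1)$ time, suffices to decide whether to report that $i$ is on the loop.

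Otherwise I would run the classical second phase: reset the tortoise to $i$, keep the hare at $z$, and advance both by $\pi$ one step at a time until they coincide. Using that the offset between hare and tortoise at the end of phase one is a multiple of $\ell$, the usual argument shows that they first meet exactly at the intersection $y_2$ of the sigma, after at most $p$ additional iterations. The total running time is $\Oh(p+\ell)$ and only a constant number of words of auxiliary storage is used.

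The main subtlety is avoiding a spurious answer when $i$ lies on the loop: the classical second phase, applied blindly, would return $i$ itself and be indistinguishable from the case that $i$ happens to coincide with the intersection. The $z=i$ characterization above is precisely what is needed to separate the two cases in constant extra time, and the rest of the argument is the standard Floyd analysis.
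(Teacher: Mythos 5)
Your proof is correct and follows the same tortoise-and-hare strategy as the paper, but the two arguments diverge in how they finish. The paper, after the first Floyd phase, walks the tortoise once around the loop from the meeting point, which simultaneously computes the loop length $\ell$ and detects whether $i$ is visited (hence on the loop); it then finds the intersection with two pointers started at $i$, one of them advanced $\ell$ steps ahead. You instead observe that when $i$ lies on the loop the two pointers first coincide exactly at $k=\ell$, so the phase-one meeting point $z$ equals $i$, and when $i$ lies on the tail $z$ is on the loop while $i$ is not, so $z\neq i$; this single equality test replaces the paper's explicit loop traversal for the membership check, and you then run the classical second phase (reset one pointer to $i$, advance both by one step) whose standard analysis — the hare's lead is a multiple of $\ell$, the tortoise cannot meet the hare while still on the tail — shows the first meeting is at the intersection $y_2$ after at most $p$ further steps. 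Both versions run in $\Oh(\ell+p)$ time with $\Oh(1)$ extra words; your variant is marginally leaner since it never computes $\ell$ explicitly, while the paper's variant is the one whose abort-on-$\perp$ behaviour is reused later for paths, which is outside the scope of this lemma and so not a gap in your argument.
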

\begin{proof}
  Consider a sigma $\sigma$.
  We can use the classic Floyd's cycle-finding algorithm called tortoise and
  hare.
  In this algorithm we maintain two pointers, initially set to $i$. In each
  iteration both pointers proceed along the sigma, where tortoise does one step
  per iteration and hare does two.
  Eventually, after at most $\ell+p$ of steps they will meet in a common point of
  $\sigma$ and the point belongs to the loop of $\sigma$.

  Then we make the tortoise move from the meeting point to traverse the whole
  loop, obtaining the length $\ell$ of the loop of sigma.
  If the tortoise visits element $i$, then we return that $i$ belongs to the
  loop.
  Otherwise, we can use two tortoises which move one step per iteration from
  the starting point $i$. One of them first does $\ell$ steps and then they
  simultaneously move by one step.
  Clearly, they will meet in the intersection of the sigma.
  See Algorithm~\ref{alg:tortoise-and-hare}.
  Then the total number of steps made by procedure $\FindIntersection(i)$
  is bounded by $3\ell+2p =\Oh(\ell+p)$.
\end{proof}

  \begin{algorithm}[h]
    \begin{algorithmic}[1]
    \Function{$\FindIntersection$}{$i$}
      \State $k \gets 0$,~ $t \gets i$,~ $h \gets i$
      \Repeat
        \If{$\pi(h)=\perp\textbf{ or }\pi(\pi(h))=\perp$} \textbf{abort}
        \EndIf
        \State $t \gets \pi(t)$,~ $h \gets \pi(\pi(h))$
        \State $k \gets k + 1$
      \Until{$t = h$}
      \State $\ell \gets 0$,~ $t_0 \gets t$
      \Repeat
        \If{$t_0=i$} \Return ``$i$ is on the loop''
        \EndIf
        \State $t_0 \gets \pi(t_0)$
        \State $\ell \gets \ell + 1$
      \Until{$t_0 = t$}
      \State $t_1 \gets i$,~ $t_2 \gets i$
      \For{$j = 1..\ell$}
        \State $t_1 \gets \pi(t_1)$
      \EndFor
      \While{$t_1 \neq t_2$}
        \State $t_1 \gets \pi(t_1)$,~ $t_2 \gets \pi(t_2)$
      \EndWhile
      \State \Return ``$t_1$ is the intersection''
    \EndFunction
    \end{algorithmic}
    \caption{Implementation of tortoise and hare technique to find the
    intersection of sigma.}
    \label{alg:tortoise-and-hare}
  \end{algorithm}
\end{document}